\def\draft{0}
\def\llncs{0}
\def\anon{0}

\ifnum\llncs=0
\documentclass[11pt]{article}
\usepackage{booktabs}
\usepackage{fullpage}
\usepackage{amsmath,amsfonts,amsthm,mathrsfs,xspace,graphicx}
\usepackage{mathpazo}
\usepackage{times}
\else
\documentclass[orivec,runningheads,envcountsame,envcountsect]{llncs}
\usepackage{amssymb,amsmath,cite,url}
\fi

\usepackage[backref,colorlinks,citecolor=blue,bookmarks=true]{hyperref}
\usepackage{endnotes}
\usepackage{color}
\usepackage{float}
\usepackage{xcolor}
\usepackage{mdframed}
\usepackage{bbm}
\usepackage{suffix} %
\usepackage{tabularx}
\usepackage{makecell}
\usepackage{amssymb,latexsym}
\usepackage{IEEEtrantools}
\usepackage[capitalize]{cleveref}
\usepackage{enumitem}
\usepackage{tikz}
\usepackage{tikz-cd}
\usepackage{multirow}
\usepackage[section]{placeins}

\mdfdefinestyle{figstyle}{ %
  linecolor=black!7, %
  backgroundcolor=black!7, %
  innertopmargin=10pt, %
  innerleftmargin=25pt, %
  innerrightmargin=25pt, %
  innerbottommargin=10pt %
}

\ifnum\llncs=0

\newtheorem{theorem}{Theorem}[section]
\newtheorem{assumption}{Assumption}

\newtheorem{proposition}[theorem]{Proposition}

\newtheorem{lemma}[theorem]{Lemma}

\newtheorem{nclaim}[theorem]{Claim}

\newtheorem{corollary}[theorem]{Corollary}

\theoremstyle{remark}

\theoremstyle{definition}
\newtheorem{definition}[theorem]{Definition}

\else
\spnewtheorem{assumption}{Assumption}{\bfseries}{\itshape}
\spnewtheorem{nclaim}[theorem]{Claim}{\bfseries}{\itshape}
\fi

\ifnum\draft=1
\def\ShowAuthNotes{1}
\else
\def\ShowAuthNotes{0}
\fi

\ifnum\ShowAuthNotes=1
\newcommand{\authnote}[3]{\textcolor{#3}{[{\footnotesize {\bf #1:} { {#2}}}]}}
\else
\newcommand{\authnote}[3]{}
\fi

\newcommand{\beq}{\begin{eqnarray}}
\newcommand{\eeq}{\end{eqnarray}}

\newcommand{\ket}[1]{|#1\rangle}
\newcommand{\bra}[1]{\langle#1|}

\newcommand{\Id}{\ensuremath{\mathop{\rm Id}\nolimits}}
\newcommand{\Es}[1]{\ensuremath{\mathop{\textsc{E}}}_{#1}}

\newcommand{\device}{\mathfrak{D}}
\newcommand{\negl}{\ensuremath{\mathop{\rm negl}}}

\newcommand{\reg}[1]{{\textsf{#1}}}

\newcommand{\C}{\ensuremath{\mathbb{C}}}
\newcommand{\N}{\ensuremath{\mathbb{N}}}

\newcommand{\R}{\ensuremath{\mathbb{R}}}

\newcommand{\mH}{\mathcal{H}}

\newcommand{\mX}{\mathcal{X}}
\newcommand{\mA}{\mathcal{A}}
\newcommand{\mK}{\mathcal{K}}

\newcommand{\mS}{\mathcal{S}}
\newcommand{\mT}{\mathcal{T}}
\newcommand{\mY}{\mathcal{Y}}

\newcommand{\rP}{\reg{P}}

\newcommand{\rE}{\reg{E}}

\newcommand{\eps}{\varepsilon}

\newcommand{\trans}{\ensuremath{\texttt{trans}}}
\newcommand{\rand}{\ensuremath{\texttt{rand}}}
\newcommand{\flag}{\ensuremath{\texttt{flag}}}
\newcommand{\acc}{\ensuremath{\texttt{acc}}}
\newcommand{\rej}{\ensuremath{\texttt{rej}}}
\newcommand{\cont}{\ensuremath{\texttt{cont}}}

\newcommand{\Gen}{\mathrm{Gen}}
\newcommand{\Inv}{\mathrm{Inv}}
\newcommand{\Enc}{\mathrm{Enc}}
\newcommand{\Dec}{\mathrm{Dec}}
\newcommand{\Eval}{\mathrm{Eval}}

\newenvironment{gamespec}{
  \begin{mdframed}[style=figstyle]}{
  \end{mdframed}}

\bibliographystyle{alpha}

\begin{document}

\ifnum\llncs=1
\author{Zvika Brakerski\inst{1}
	\and Alexandru Gheorghiu\inst{2}
	\and Gregory D.\ Kahanamoku-Meyer\inst{3,4}
	\and Eitan Porat\inst{1}
	\and Thomas Vidick\inst{1,5}}
\institute{Weizmann Institute of Science, Israel \and Chalmers University of Technology, Sweden \and Lawrence Berkeley National Laboratory, Berkeley, California, USA \and University of California, Berkeley, California, USA \and California Institute of Technology}
\authorrunning{Brakerski, Gheorghiu, Kahanamoku-Meyer, Porat and Vidick}
\date{}
\else
\ifnum\anon=1
\author{}
\else
\author{Zvika Brakerski\thanks{Weizmann Institute of Science, Israel, \texttt{zvika.brakerski@weizmann.ac.il}. Supported by the Israel Science Foundation (Grant No.\ 3426/21), and by the European Union Horizon 2020 Research and Innovation Program via ERC Project REACT (Grant 756482).}
\and Alexandru Gheorghiu\thanks{Chalmers University of Technology, Sweden. \texttt{alexandru.gheorghiu@chalmers.se}. Supported by the Knut and Alice Wallenberg Foundation through the Wallenberg Centre for Quantum Technology (WACQT).}
\and Gregory D. Kahanamoku-Meyer\thanks{Lawrence Berkeley National Laboratory \& University of California, Berkeley, California, USA. \texttt{gkm@berkeley.edu}. Supported by the U.S. Department of Energy, Office of Science, Office of Advanced Scientific Computing Research, under the Accelerated Research in Quantum Computing (ARQC) program.}
\and Eitan Porat\thanks{Weizmann Institute of Science.}
\and Thomas Vidick\thanks{California Institute of Technology and Weizmann Institute of Science. Email: \texttt{thomas.vidick@weizmann.ac.il}. Supported by a grant from the Simons Foundation (828076, TV), MURI Grant FA9550-18-1-0161, AFOSR Grant FA9550-21-S-0001, and a research grant from the Center for New Scientists at the Weizmann Institute of Science.}
}
\fi

\ifnum\draft=1
\date{\today}
\else
\date{}
\fi

\fi

\sloppy %

\title{Simple Tests of Quantumness Also Certify Qubits
\ifnum\llncs=1
\thanks{The full version of this paper can be found at \url{https://arxiv.org/abs/2303.01293}.}
\fi
}

\maketitle

\begin{abstract}

A test of quantumness is a protocol that allows a classical verifier to certify (only) that a prover is not classical. 
We show that tests of quantumness that follow a certain template, which captures recent proposals such as~\cite{KCVY2022,kalai2022quantum}, can in fact do much more.
Namely, the same protocols can be used for \emph{certifying a qubit}, a building-block that stands at the heart of applications such as certifiable randomness and classical delegation of quantum computation.

Certifying qubits was previously only known to be possible based on families of post-quantum trapdoor claw-free functions (TCF) with an advanced ``adaptive hardcore bit'' property, which have only been constructed based on the hardness of the Learning with Errors problem~\cite{brakerski2021cryptographic} and recently isogeny-based group actions~\cite{alamati2023candidate}. 
Our framework allows certification of qubits based only on the existence of post-quantum TCF, without the adaptive hardcore bit property, or on quantum fully homomorphic encryption. These can be instantiated, for example, from Ring Learning with Errors.
This has the potential to improve the efficiency of qubit certification and derived functionalities.

On the technical side, we show that the \emph{quantum soundness} of any such protocol can be reduced to proving a bound on a simple algorithmic task: informally, answering ``two challenges simultaneously'' in the protocol. 
Our reduction formalizes the intuition that these protocols demonstrate quantumness by leveraging the impossibility of rewinding a general quantum prover. 
This allows us to prove tight bounds on the quantum soundness of~\cite{KCVY2022} and~\cite{kalai2022quantum}, showing that no quantum polynomial-time prover can succeed with probability larger than $\cos^2 \frac{\pi}{8}\approx 0.853$. 
Previously, only an upper bound on the success probability of classical provers, and a lower bound on the success probability of quantum provers, were known. 
We then extend this proof of quantum soundness to show that provers that approach the quantum soundness bound must perform almost anti-commuting measurements.
This certifies that the prover holds a qubit.
\end{abstract}

\newpage

\section{Introduction}

A \emph{cryptographic test of quantumness}\footnote{Also sometimes referred to as \emph{proof of quantumness}~\cite{brakerski2020simpler}.} is an interactive protocol allowing a classical polynomial-time \emph{verifier} to determine, with high confidence, that a (possibly quantum) polynomial-time \emph{prover} with which the verifier is interacting is non-classical. More precisely, it should be the case that there exists a quantum polynomial-time prover that succeeds with high probability in the protocol (\emph{quantum completeness}) whereas no classical polynomial-time prover can succeed with comparable probability (\emph{classical soundness}). Ideally, the former statement should hold with a ``simple'' quantum prover (i.e. one performing small quantum circuits), and the latter statement should hold based on the weakest possible cryptographic assumption.

A simple example of a test of quantumness consists in asking the prover to factor a large integer chosen by the verifier. While this proposal satisfies both completeness and soundness at the coarsest level, it suffers from two important limitations. The first is that demonstrating success for the quantum prover requires executing Shor's quantum algorithm for factoring~\cite{shor1994algorithms}, which although quantum polynomial-time requires a large circuit to be executed (roughly, $\tilde{O}(\lambda)$ qubits and $\tilde{O}(\lambda^2)$ circuit size,  where $\lambda$ is the bit-length of the integer).

The second, even more significant limitation, is that it is unclear how to build any other interesting cryptographic primitive on top of such a protocol. Indeed, beyond the near-term demonstration of a quantum advantage the impetus for studying tests of quantumness comes from their potential use as building blocks towards more complex protocols in quantum cryptography, such as protocols for certified randomness~\cite{brakerski2021cryptographic, mahadev2022efficient}, for testing that a quantum prover is able to coherently manipulate a certain number of qubits (``tests for quantum space'')~\cite{fu2022computational, gheorghiu2022quantum}, or for classical delegation of quantum computations~\cite{mahadev2018classical, gheorghiu2019computationally, zhang2022classical}. Currently, the only known protocols for such tasks all have at their heart a simple test of quantumness. In fact, to the best of our knowledge for all known protocols the test of quantumness is the same --- it is the test introduced in~\cite{brakerski2021cryptographic}.\footnote{The one exception is the recent test of quantumness by Yamakawa and Zhandry~\cite{yamakawa2022verifiable}, which also achieves certifiable randomness generation, albeit by relying on a conjecture of Aaronson and Ambainis~\cite{aaronson2014need}. We exclude ``quantum supremacy'' demonstrations such as~\cite{aaronson2011computational,GoogleSupremacy} because (i) they are not efficiently verifiable, and (ii) except for a single exception~\cite{bassirian2021certified}, they are not known to lead to any interesting cryptographic task.} The soundness of this test was originally shown based on the post-quantum hardness of the Learning With Errors (LWE) problem~\cite{regev2009lattices}, and more recently the same test (in fact, the same underlying assumption of post-quantum TCF with adaptive hardcore bit) was also shown sound based on isogeny-based group actions~\cite{alamati2023candidate}, an advanced post-quantum cryptographic assumption. 

The pervasiveness of the test of quantumness from~\cite{brakerski2021cryptographic} for advanced cryptographic applications begs the question --- can new tests of quantumness be found, that preserve the versatility of the previous test and its potential for applications but simplify, or at least diversify, the range of assumptions on which the test is based? Besides its intrinsic interest, answering this question may lead to more versatile as well as more prover-efficient tests; the test~\cite{brakerski2021cryptographic} requires quite an aggressive setting of parameters for LWE which hampers its practical applicability (see however the proof of principle demonstration in~\cite{zhu2021interactive}).

Recently there has been progress on this question in two different directions. Firstly, in~\cite{brakerski2020simpler} the authors introduce a test that is sound in the (quantum) random oracle model, assuming only the existence of trapdoor claw-free functions (defined later). Going even further~\cite{yamakawa2022verifiable} show a \emph{non-interactive} test of quantumness in the random oracle model. Secondly, very recently there have been two new proposals in the standard model. The test from~\cite{KCVY2022} is a 6-message protocol whose soundness can be based on any family of trapdoor claw-free functions. The test from~\cite{kalai2022quantum} is a 4-message protocol, that can be instantiated based on any two-player nonlocal game with a quantum advantage, and whose soundness relies on the existence of a quantum homomorphic encryption scheme, with some specific properties (which are known to hold for e.g.\ the scheme in~\cite{mahadev2020classical}).

However, these latter tests of quantumness suffer from an important limitation: prior to our work, it was not known how they could be expanded into a test for certified randomness, a test for a qubit (defined informally below), let alone a delegation protocol. This is simply because it was unknown how to analyze the behavior of malicious quantum provers in the protocol! For these protocols, only quantum \emph{completeness} was known, but it was not known if the proposed honest strategy for the quantum prover is optimal. For example, for the proposals in~\cite{KCVY2022,kalai2022quantum} it was not even known if the success probability demonstrated by the honest quantum prover (in both cases, $\cos^2\frac{\pi}{8}\approx 0.853\ldots$) was the optimal success probability --- it was left open if there could be a quantum prover that succeeds with probability $1$.

The main technical difficulty is that for showing soundness against classical adversaries, one can use classical rewinding-type arguments. This is no longer possible against quantum provers: quantum rewinding is notoriously delicate, and while some general results have started to appear (e.g.\ \cite{CMSZ21}) they do not apply to the above tests.
In fact, in some sense they \emph{should not} apply, because the goal of the test is to demonstrate a quantum advantage through the quantum demonstration of a \emph{non-rewindable} task.

\subsection{Our Results}

We break the ``quantum soundness barrier'' by showing that $\cos^2 \frac{\pi}{8}$ is indeed the optimal success probability for the protocols introduced in~\cite{KCVY2022,kalai2022quantum}. In fact, we show a general bound that applies to a broad class of protocols encompassing the above two.
 Furthermore, we are able to show that the measurement operators used by any near-optimal prover must satisfy a form of approximate anti-commutation. Such anti-commutation is a key signature of quantumness, which is informally known as a ``test for a qubit'' in the literature (see e.g.~\cite[Lecture 2]{vidick2020course} for more on this). A very similar statement forms the basis for the certified randomness protocol in~\cite{brakerski2021cryptographic}. Our results thus open the door for making use of the protocols from~\cite{KCVY2022,kalai2022quantum} for the same applications as the one from~\cite{brakerski2021cryptographic}, from certified randomness to delegated computation~\cite{mahadev2018classical,gheorghiu2019computationally}. The main advantage of our protocol is that its analysis does not require the infamous ``adaptive hardcore bit'' property used in~\cite{brakerski2021cryptographic}. As a result, we can instantiate it with any family of trapdoor claw-free functions, such as can be constructed from e.g.\ the ring-LWE problem, leading to more efficient protocols and better parameters.\footnote{We recall that the use of adaptive hardcore bits in \cite{brakerski2021cryptographic} led to a significant degradation in the parameters for LWE that could be used.}

Before describing our results, and their proof, in more detail, we give a high-level overview of our ``template protocol'' (see Figure~\ref{fig:template} for a summary), which specifies a general format for a test of quantumness to which our results apply. 
Our template is divided into two phases. The first phase, Phase A, is a setup phase in which the verifier and prover exchange classical information that, informally, enables the quantum prover to create the ``right'' initial state. 
For example, for the protocol from~\cite{KCVY2022} the result of Phase A is that the honest prover has returned to the verifier a string $y$, which has two preimages $x_0$ and $x_1$ under a claw-free function $f$, as well as an $n$-bit string $d$, and furthermore the prover holds the quantum state
\begin{equation*}
 \frac{1}{\sqrt{2}}\big( \ket{r\cdot x_0} +  (-1)^{d\cdot(x_0\oplus x_1)} \ket{r\cdot x_1}\big)\;,
\end{equation*}
for some uniformly random string $r$ that is communicated to the prover at some point during the phase.
(Note that the protocol of \cite{brakerski2021cryptographic} can also be cast in this language, although the equivalent of Phase B has a different structure so that protocol does not fully fall under the template we describe here.)
Note that a ``phase'' is allowed to take place over multiple rounds of interaction. 
Phase A may also incorporate a test executed by the verifier, in which case the verifier is allowed to end the protocol outright with a flag signifying acceptance or rejection of the prover.
(The ``preimage test'' in \cite{brakerski2021cryptographic,KCVY2022} is an example of such a test that occurs in Phase A).

Assuming no such flag has been raised, the protocol proceeds to Phase B. This phase is very simple:
the prover is sent a uniformly random challenge bit $m\in \{0,1\}$ and required to respond with a bit $b\in \{0,1\}$.
 The verifier classically computes a correct value $\hat{c}_m = \hat{c}_m(\rand,\trans)$, where $\rand$ are the verifier's private random bits and $\trans$ the classical communication transcript from Phase A, and the verifier accepts if and only if $(-1)^b = \hat{c}_m$.

Our main result is a reduction from classical, respectively quantum, soundness of the protocol template to a specific guessing task. Informally, we show the following.

\begin{theorem}[Main theorem, informal]
Suppose that no classical (resp. quantum) polynomial-time algorithm may produce a guess for the parity $\hat{c}_0\cdot \hat{c}_1$ that is correct with a non-negligible advantage.\footnote{Here, $\hat{c}_0$ and $\hat{c}_1$ are defined conditional on a transcript for Phase A of the protocol template. We refer to Theorem~\ref{thm:main} for a more precise formulation.} Then no classical (resp. quantum) polynomial-time prover can succeed in the protocol template with probability larger than $\frac{3}{4}$ (resp.\ $\cos^2 \frac{\pi}{8}$) by more than a negligible amount.
\end{theorem}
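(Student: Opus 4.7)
The plan is to prove the contrapositive: given a classical (resp.\ quantum) polynomial-time prover $\mathcal{P}$ that wins the template protocol with probability $3/4+\epsilon$ (resp.\ $\cos^2(\pi/8)+\epsilon$) for some non-negligible $\epsilon$, I would build a classical (resp.\ quantum) polynomial-time algorithm $\mathcal{A}$ that interacts with the verifier only through Phase~A and outputs a guess for $\hat{c}_0\hat{c}_1$ correct with advantage $\Omega(\epsilon)$. In both regimes, $\mathcal{A}$ simulates a full run of Phase~A using $\mathcal{P}$ as a black box, conditions on the verifier not aborting, and then tries to extract two answer bits $b_0,b_1$ -- one for each possible Phase~B challenge -- and outputs $(-1)^{b_0+b_1}$.

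For the classical case, because $\mathcal{P}$ is deterministic given $\trans$ and its random tape, $\mathcal{A}$ can rewind $\mathcal{P}$ after Phase~A and execute Phase~B twice, once with $m=0$ and once with $m=1$, collecting $b_0$ and $b_1$. Letting $W_m$ be the indicator of winning challenge~$m$, the output matches $\hat{c}_0\hat{c}_1$ precisely when $W_0=W_1$, and combining the identity $\Pr[W_0=W_1]=1-p_0-p_1+2\Pr[W_0=W_1=1]$ with the inclusion-exclusion bound $\Pr[W_0=W_1=1]\ge p_0+p_1-1$ and the hypothesis $p_0+p_1\ge 3/2+2\epsilon$ immediately yields advantage at least $2\epsilon$.

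In the quantum case rewinding is not available, so $\mathcal{A}$ instead performs the two Phase-B measurements sequentially on the single post-Phase-A state $\rho=\rho(\trans)$. Writing $O_0,O_1$ for the prover's $\pm 1$ observables on challenges $0,1$ and $\hat{c}_0,\hat{c}_1\in\{\pm 1\}$ for the verifier's intended answers, a short computation using $\sum_{a\in\{\pm 1\}}a\,P_0^{a}\rho P_0^{a}=\tfrac12\{O_0,\rho\}$ gives $\mathbb{E}[o_0 o_1\mid\trans]=\tfrac12\langle\{O_0,O_1\}\rangle_\rho$, so the advantage of outputting $o_0 o_1$ is $\tfrac14\,\mathbb{E}_\trans\!\bigl[\bar{c}_{01}(\trans)\,\langle\{O_0,O_1\}\rangle_\rho\bigr]$, where $\bar{c}_{01}(\trans)=\mathbb{E}_{\rand\mid\trans}[\hat{c}_0\hat{c}_1]$. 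The Tsirelson-type step is the variance inequality $\langle M\rangle^2\le\langle M^2\rangle$ applied to the operator $M=\bar{c}_0(\trans) O_0+\bar{c}_1(\trans) O_1$, with $\bar{c}_m(\trans)=\mathbb{E}_{\rand\mid\trans}[\hat{c}_m]\in[-1,1]$, which rearranges to
\begin{equation*}
\bar{c}_0\bar{c}_1\,\langle\{O_0,O_1\}\rangle\ \ge\ \bigl(\bar{c}_0\langle O_0\rangle+\bar{c}_1\langle O_1\rangle\bigr)^{\!2}-\bar{c}_0^2-\bar{c}_1^2
\end{equation*}
for every fixed transcript. Averaging over $\trans$, using $\bar{c}_m^2\le 1$, Jensen's inequality, and the hypothesis, which rewrites as $\mathbb{E}_\trans[\bar{c}_0\langle O_0\rangle+\bar{c}_1\langle O_1\rangle]=\sqrt 2+4\epsilon$, the right-hand side is at least $(\sqrt 2+4\epsilon)^2-2=\Omega(\epsilon)$, yielding $\mathcal{A}$ advantage $\Omega(\epsilon)$ under the simplifying assumption that $\hat{c}_0$ and $\hat{c}_1$ are conditionally independent given $\trans$ (so $\bar{c}_{01}=\bar{c}_0\bar{c}_1$).

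The main technical obstacle I anticipate is the general case where $\hat{c}_0$ and $\hat{c}_1$ are \emph{not} conditionally independent given $\trans$: there $\bar{c}_{01}$ can decouple from $\bar{c}_0\bar{c}_1$ in a way that breaks the displayed inequality after averaging. I would address this either by (i) enlarging the notion of ``transcript'' used by the reduction to include any verifier-side randomness on which both $\hat{c}_m$ jointly depend -- permissible because the reducer internally simulates the verifier and can expose those bits to itself as part of its view -- or by (ii) replacing the pointwise variance bound by a global operator bound on $\mathbb{E}_{\trans,\rand}[\hat{c}_0\hat{c}_1\langle\{O_0,O_1\}\rangle]$, at the cost of a slightly weaker but still polynomial dependence on $\epsilon$. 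Minor bookkeeping -- conditioning on non-abort of Phase~A, renormalising the sub-normalised $\rho(\trans)$, dilating any POVMs to projective measurements via Naimark, and noting that the symmetry of $\{O_0,O_1\}$ makes the sequential-measurement estimator order-independent -- should not affect the polynomial scaling.
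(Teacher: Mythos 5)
Your approach is correct in outline but departs from the paper on the quantum side, so let me compare and then flag the one place where your argument is cleaner than you realize. The paper proves the quantum bound by constructing the sequential-measurement parity adversary of your sketch, but then invokes \emph{Jordan's lemma} to decompose $\mH_\reg{PE}$ into at most two-dimensional blocks stabilized by both $Q_0=\Pi^0_{\hat c_0}$ and $Q_1=\Pi^1_{\hat c_1}$, obtaining $p_m=\sum_i t_i\cos^2(\cdot)$ and $p_\mathrm{xor}=\sum_i t_i\cos^2(\alpha_i-\beta_i)$, and finishes with the scalar trigonometric inequality $\cos^2\alpha+\cos^2\beta\le|2\cos^2(\alpha-\beta)-1|+2\cos^2(\pi/8)$. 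You instead use the operator identity $\sum_a aP_0^a\rho P_0^a=\tfrac12\{O_0,\rho\}$, reducing everything to the expected anticommutator, and close with the ``Tsirelson'' variance inequality $\langle M\rangle^2\le\langle M^2\rangle$. Both are standard for CHSH-type bounds; your route avoids Jordan's lemma entirely and is arguably the more economical way to get the soundness number. The paper's choice is motivated by what comes next: the Jordan angles $\alpha_i,\beta_i$ are exactly what Proposition~4.6 and Theorem~4.9 need to prove concentration (the qubit test), which your anticommutator estimate alone does not directly yield — $E|\langle\{S_0,S_1\}\rangle|$ small does not by itself bound $\langle\{S_0,S_1\}^2\rangle$. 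If you wanted the qubit test too, you would have to reintroduce Jordan's lemma or a near-tightness analysis of $\langle M\rangle^2\le\langle M^2\rangle$.

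On the gap you flag: you have talked yourself into a problem that isn't there. The correct thing is not to replace $\hat c_m$ by $\bar c_m(\trans)=\mathbb E_{\rand\mid\trans}[\hat c_m]$, but to apply the variance inequality \emph{pointwise in $(\rand,\trans)$} with $M=\hat c_0O_0+\hat c_1O_1$ and $\hat c_m\in\{\pm1\}$. Then $M^2=2\Id+\hat c_0\hat c_1\{O_0,O_1\}$ exactly (no $\bar c_m^2\le1$ slack), and $\langle M\rangle^2\le\langle M^2\rangle$ gives $\hat c_0\hat c_1\langle\{O_0,O_1\}\rangle\ge(\hat c_0\langle O_0\rangle+\hat c_1\langle O_1\rangle)^2-2$ for every $(\rand,\trans)$. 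Taking $\mathbb E_{\rand,\trans}$ and Jensen on the right gives $\mathbb E[\hat c_0\hat c_1\langle\{O_0,O_1\}\rangle]\ge(\sqrt2+4\eps)^2-2\ge8\sqrt2\,\eps$, and since the advantage is $\tfrac14\,\mathbb E\bigl|\hat c_0\hat c_1\langle\{O_0,O_1\}\rangle\bigr|\ge\tfrac14\,\mathbb E[\hat c_0\hat c_1\langle\{O_0,O_1\}\rangle]$, you are done — no conditional-independence hypothesis ever appears. Your option~(ii) is unnecessary, and your option~(i) is worded dangerously: you must not literally ``expose $\rand$ to the reducer'' as an input to the parity adversary, since $\rand$ typically contains trapdoor material and would trivialize Assumption~4.5; the point is only that the \emph{analysis} conditions on $\rand$ while the adversary's output remains a function of $\trans$ and its own measurements, which is exactly how Claim~4.4 in the paper is stated.

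Your classical argument via inclusion--exclusion matches the paper's (which notes that classically $\mS_2=\emptyset$ and reduces to the same union-bound computation). Minor bookkeeping: you should state that you take $O_m=\Pi^m_0-\Pi^m_1$ (so $\hat c_mO_m=2Q_m-\Id$) and note that the order of the two sequential measurements is irrelevant precisely because $\tfrac12\Tr(\{O_0,O_1\}\rho)$ is symmetric — which you already observe.
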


The strength of this result is that it reduces an a priori complex task --- bounding the success probability of a classical or quantum prover in an interactive protocol --- to showing limitations on adversaries in a much simpler, non-interactive task. 
In particular, as we will see from the examples (Section~\ref{sec:applications}), in virtually all known cases the hardness of the parity-predicting task considered in the theorem is essentially immediate from the computational hardness assumption that underlies the protocol. 
For example, in the protocol from~\cite{KCVY2022} the product $\hat{c}_0\cdot \hat{c}_1$ turns out to equal $r\cdot (x_0\oplus x_1)$. 
A prover who can predict this quantity with advantage $\eps$ for a uniformly random $r$ can, via the Golreich-Levin theorem, recover $x_0\oplus x_1$ with a related advantage. 
Since Phase A of this version of the protocol verifies that the prover also knows either $x_0$ or $x_1$, we conclude that the prover is able to recover a claw $(x_0,x_1)$, violating the claw-free property of the underlying function family.

The conceptually most appealing feature of our theorem is that it ``explains'' the bounds $\frac{3}{4}$ and $\cos^2 \frac{\pi}{8}$ that had been observed without justification for previous tests of quantumness, e.g.~\cite{KCVY2022,kalai2022quantum}. In particular, as already mentioned $\cos^2 \frac{\pi}{8}$ was established as a lower bound, but not an upper bound, on the maximum success probability of a quantum prover in those works. As such our result can be interpreted as a form of computational ``quantum non-rewinding'' result, which precisely establishes the extent to which a classical or quantum procedure may produce guesses for two quantities whose parity is known to be hard to predict.

Going beyond quantum soundness, we also establish that any quantum prover which succeeds close to the optimum probability must do so by using two binary measurements in Phase B of the protocol that are close to maximally anti-commuting measurements. This is what is generally known as a ``test for a qubit'' in the quantum information literature. Informally,

\begin{theorem}[Qubit test, informal]
Suppose that a quantum polynomial-time prover succeeds in the protocol template with probability at least $\cos^2 \frac{\pi}{8} - \eps$. Let $S_0$ and $S_1$ be the two binary observables associated with the prover's measurements on a Phase B challenge of $m=0$ and $m=1$ respectively. Then $S_0$ and $S_1$ are within distance $O(\sqrt{\eps})$ of a pair of perfectly anti-commuting measurements.\footnote{Here, ``distance'' should be measured using the appropriate norm. We use the standard ``state-dependent norm'' from self-testing. See Theorem~\ref{thm:qubit} for the precise formulation.}
\end{theorem}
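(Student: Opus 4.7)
The plan is to convert the near-optimality $p \geq \cos^2 \tfrac{\pi}{8} - \varepsilon$ into a state-dependent bound on the anti-commutator $\{S_0, S_1\} := S_0 S_1 + S_1 S_0$ via a Tsirelson-style algebraic identity, and then round the resulting approximately anti-commuting pair to an exactly anti-commuting one. Let $v$ range over Phase A transcripts, let $|\psi_v\rangle$ denote the (sub-normalized) prover state entering Phase B, and set
\begin{equation*}
A_v := \hat c_0(v)\, S_0 + \hat c_1(v)\, S_1, \qquad p \;=\; \tfrac{1}{2} + \tfrac{1}{4}\,\mathbb{E}_v\bigl[\langle \psi_v | A_v | \psi_v\rangle\bigr].
\end{equation*}
Since $S_i^2 = I$ and $\hat c_i \in \{\pm 1\}$ one has $A_v^2 = 2I + \hat c_0 \hat c_1 \{S_0, S_1\}$, and so
\begin{equation*}
(\sqrt{2}\, I - A_v)^2 \;=\; 4\, I \;-\; 2\sqrt{2}\, A_v \;+\; \hat c_0(v)\hat c_1(v)\,\{S_0,S_1\}.
\end{equation*}
Taking expectation on $|\psi_v\rangle$ and averaging over $v$ gives the identity
\begin{equation*}
\mathbb{E}_v\bigl[\|(\sqrt{2}\, I - A_v)|\psi_v\rangle\|^2\bigr] \;=\; 4 \;-\; 2\sqrt{2}\,(4p-2) \;+\; \mathbb{E}_v\bigl[\hat c_0 \hat c_1 \langle\{S_0,S_1\}\rangle_{\psi_v}\bigr].
\end{equation*}

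The next step is to argue that the last expectation is negligible. The quantum polynomial-time procedure that measures $S_0$ then $S_1$ on $\rho_v$ and outputs the product of the two $\pm 1$ outcomes has expected output exactly $\tfrac{1}{2}\langle\{S_0, S_1\}\rangle_{\psi_v}$ (a direct computation using the projectors $(I \pm S_0)/2$), so its bias toward $\hat c_0\hat c_1$ equals $\tfrac{1}{2}\mathbb{E}_v[\hat c_0 \hat c_1 \langle\{S_0,S_1\}\rangle_{\psi_v}]$; the main theorem's non-predictability hypothesis then forces this to be $\negl(\lambda)$. Substituting $4p - 2 \geq \sqrt{2} - 4\varepsilon$ into the previous display yields $\mathbb{E}_v[\|(\sqrt{2}\, I - A_v)|\psi_v\rangle\|^2] \leq 8\sqrt{2}\,\varepsilon + \negl(\lambda)$, and factoring $A_v^2 - 2I = (A_v + \sqrt{2}\, I)(A_v - \sqrt{2}\, I)$ together with $\|A_v + \sqrt{2}\, I\|_{\mathrm{op}} \leq 2 + \sqrt{2}$ promotes this to
\begin{equation*}
\mathbb{E}_v\bigl[\|\{S_0,S_1\}|\psi_v\rangle\|^2\bigr] \;=\; O(\varepsilon),
\end{equation*}
i.e.\ approximate anti-commutation in the state-dependent $2$-norm.

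To produce a genuinely anti-commuting binary observable, I would then apply the standard self-testing rounding step. Decompose $S_1 = S_1^{\mathrm{sym}} + S_1^{\mathrm{anti}}$ with $S_1^{\mathrm{sym}} := \tfrac{1}{2} S_0 \{S_0, S_1\}$; then $\{S_0, S_1^{\mathrm{anti}}\} = 0$ exactly, and $\|S_1 - S_1^{\mathrm{anti}}\|_{\psi_v,\mathrm{avg}} = \tfrac{1}{2}\|\{S_0,S_1\}\,|\psi_v\rangle\|_{\mathrm{avg}} = O(\sqrt{\varepsilon})$. Set $\widetilde S_1 := \mathrm{sgn}(S_1^{\mathrm{anti}})$: this is $\pm 1$-valued, still anti-commutes exactly with $S_0$ (the relation $S_0 S_1^{\mathrm{anti}} S_0 = -S_1^{\mathrm{anti}}$ passes through the sign function by functional calculus), and from the identity $(S_1^{\mathrm{anti}})^2 = I - \tfrac{1}{4}\{S_0,S_1\}^2$ together with the elementary inequality $(1-|t|)^2 \leq 1 - t^2$ for $|t|\leq 1$ (applied functionally to the Hermitian contraction $S_1^{\mathrm{anti}}$) one obtains $\|(\widetilde S_1 - S_1^{\mathrm{anti}})|\psi_v\rangle\|^2 \leq \langle I - (S_1^{\mathrm{anti}})^2\rangle_{\psi_v}$, which averages to $O(\varepsilon)$. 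The pair $(S_0, \widetilde S_1)$ is then the perfectly anti-commuting pair within state-dependent distance $O(\sqrt{\varepsilon})$ of $(S_0, S_1)$ asserted by the theorem.

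The main technical obstacle lies in the middle paragraph: one must be precise about the computational model to argue that the ``measure $S_0$ then $S_1$, output the product'' algorithm, together with an honest continuation of the prover's Phase A strategy, really constitutes an efficient quantum predictor for $\hat c_0\hat c_1$ of the exact form required by the hypothesis of the main theorem (in particular that the bias it achieves is comparable to that exhibited by the equality above, despite the conditioning on transcripts and the possibility that the Phase A flag rejected). Once this reduction is set up carefully, the remaining steps are linear-algebraic manipulations hinging on the single identity $A_v^2 = 2I + \hat c_0 \hat c_1 \{S_0, S_1\}$.
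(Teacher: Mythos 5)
Your proposal is correct and takes a genuinely different route from the paper. The paper works throughout with Jordan's lemma: it decomposes $\mH_{\reg{PE}}$ into two-dimensional blocks parameterized by angles $(\alpha_i,\beta_i)$, writes both the Phase~B success probability and the parity-adversary's success probability $p_{\mathrm{xor}}$ in terms of these angles, and then uses a trigonometric inequality together with near-tightness of Jensen's inequality to conclude that $|\alpha_i-\beta_i|\approx \pi/4$ and $|\alpha_i+\beta_i|\approx 0$ on most of the mass, from which $\{S_0,S_1\}^2\approx 0$ follows by direct computation in each block (Proposition~\ref{prop:qubit} and Theorem~\ref{thm:qubit}). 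You instead work with the Tsirelson-type sum-of-squares certificate $(\sqrt{2}\,\Id - A_v)^2 = 4\Id - 2\sqrt{2}\,A_v + \hat c_0\hat c_1\{S_0,S_1\}$ with $A_v = \hat c_0 S_0 + \hat c_1 S_1$, and observe that the cross term is, up to a factor of $4$, precisely the bias of the paper's own parity adversary (Figure~\ref{fig:parity-adv}), hence bounded by Assumption~\ref{ass:parity}; the bound $\bra{\psi_\trans}\{S_0,S_1\}^2\ket{\psi_\trans}=O(\eps)$ then falls out by factoring $A_v^2-2\Id$. The ``main technical obstacle'' you flag in your middle paragraph is in fact already resolved by the paper's framework: the sequential ``measure $S_0$, then $S_1$, output the product'' procedure \emph{is} the parity adversary of Figure~\ref{fig:parity-adv}, and one can check from Lemma~\ref{lem:xor-suc} that $p_{\mathrm{xor}} - \tfrac12 = \tfrac14\hat c_0\hat c_1\langle\{S_0,S_1\}\rangle_{\psi_\trans}$, so Assumption~\ref{ass:parity} directly controls your cross term. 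Relative to the paper your approach is cleaner — it avoids Jordan's lemma entirely and yields $O(\eps+s)$ rather than $O(\eps+\sqrt{s})$, an improvement that is inconsequential when $s=\negl(\lambda)$ but structurally nicer — at the cost of not giving the explicit angle information that Proposition~\ref{prop:qubit} provides. You also carry out the rounding to an exactly anti-commuting pair, which the paper delegates to a cited lemma; your construction via $\tilde S_1 = \mathrm{sgn}(S_1^{\mathrm{anti}})$ is standard and correct in spirit, though one should be careful on $\ker S_1^{\mathrm{anti}}$ (where $\mathrm{sgn}$ is not canonically $\pm1$-valued and the anti-commutation with the restriction of $S_0$ may require adjoining an ancilla qubit) — a well-known subtlety that does not affect the formal statement in Theorem~\ref{thm:qubit}.
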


This result is formalized in Theorem~\ref{thm:qubit}. 
While more technical, the statement will be familiar to researchers in the area of self-testing, and it is well-known to have powerful consequences. 
In~\cite{brakerski2021cryptographic} a similar statement is used to obtain certified randomness, and this has been expanded in~\cite{gheorghiu2019computationally} to obtain verifiable classical delegation of quantum computation. 
In a work of Merkulov and Arnon-Friedman~\cite{AM23} the connection with certified randomness is made explicit. 
They show that the specific bound on the anti-commutator obtained in our qubit test implies precise quantitative bounds on the randomness generated in a single execution of our (and other) protocols, as well as on the accumulation of randomness across many sequential executions. 

\paragraph{Related work.} While writing our results, we learned that Natarajan and Zhang~\cite{NZ23} had independently obtained directly related, yet strictly incomparable, results. Natarajan and Zhang focus on the protocol from~\cite{kalai2022quantum} (whereas our result applies more generically), when specialized to the CHSH nonlocal game, and establish its quantum soundness as well as the property of ``test for a qubit.'' This part of the results is common to both our works. However, they go further by showing that the test for a qubit can be leveraged to implement a complete protocol for verifiable delegation of quantum computations, an application which we do not investigate (though we expect our qubit test to also yield a quantum verification protocol by making use of the history state construction as in~\cite{mahadev2018classical}).

\subsection{Technical Overview}

Beyond its conceptual clarity, another appealing feature of our main result is that its proof is simple! We sketch the argument here. Let's start by observing that the statement for classical soundness is almost immediate. This is because a prover who is able to predict $\hat{c}_0$ with probability $p_0$, and $\hat{c}_1$ with probability $p_1$, such that $\frac{1}{2}(p_0+p_1)=\frac{3}{4}+\eps$, must, by a union bound, be able to predict both $\hat{c}_0$ and $\hat{c}_1$, and hence their product, with probability at least $\frac{1}{2}+2\eps$. If the latter is hard, then the prover cannot exist.

Now, let us think about the quantum case. Here the argument is more delicate, because the ``union bound'' does not apply. If it was known that the quantum prover responds correctly on at least one of the challenges with probability close to $1$, then we could use a tool such as the gentle measurement lemma (see e.g.~\cite[Lemma 9.4.1]{wilde2011classical}) to perform the rewinding. However, in general this will not be the case --- and indeed, it \emph{cannot} be the case, since we expect that there should be a quantum advantage!

The central question is thus the following: what is the smallest possible probability $p$ such that a quantum prover who can predict either $\hat{c}_0$, or $\hat{c}_1$, with overall probability $p$ (on average over the choice of a uniform $m\in \{0,1\}$), can also predict $\hat{c}_0\cdot \hat{c}_1$ with non-negligible advantage? We show that the answer is the famous probability $\cos^2 \frac{\pi}{8}\approx 0.853$. This probability already appears in the analysis of the nonlocal game CHSH,\footnote{CHSH is a well known two-prover protocol, which allows to certify quantum correlations between two non-communicating provers.} and it is not a coincidence. We explain why. Recall that in the CHSH game, two isolated provers Alice and Bob are given uniformly generated inputs $x,y \in \{0,1\}$ respectively, and are tasked with generating bits $a,b\in\{0,1\}$ respectively, where
\begin{equation}\label{eq:chsh}
a\oplus b \,=\, x\wedge y
\end{equation}
Observe that this condition means that, on input $y=0$ we must have $b_0 = a$, whereas on input $y=1$ we must have $b_1 = a\oplus x$. Therefore, Bob's task is to find two measurement operators whose outcomes satisfy $b_0\oplus b_1=x$. But $x$ is Alice's input, which Bob has no information at all about! The question of finding an optimal strategy in the CHSH game is then reduced to finding measurements for Bob that lead to the highest probability of success for~\eqref{eq:chsh}, while knowing that the parity $b_0\oplus b_1$ is a bit that is information-theoretically impossible for Bob to predict. It then immediately follows --- using the same argument as above --- that the maximum success probability for a classical Bob is $\frac{3}{4}$. And it is also known, albeit harder to prove, that the maximum quantum success probability is $\cos^2 \frac{\pi}{8}$, and that this can only be achieved using a pair of anti-commuting measurements for Bob.

Working out the actual result in our setting requires a bit more work. This is because the situation is not completely analogous to that of CHSH; in particular the state on which Bob (here, the quantum prover in Phase B of the protocol template) makes his measurement is the result of an interaction with the verifier, not of a measurement by Alice on some prior entangled state. While the technical setup differs, ultimately we are able to apply similar tools to those applied in quantum information theory for the analysis of non-local games. In particular, we make a careful use of Jordan's lemma to reduce the analysis to a $2$-dimensional problem (for this it is crucial that the verifier's challenge in the last round consists of a single bit). Having reduced the problem to two dimensions we quantify the tension between the tasks of succeeding in the protocol, and the potential of the quantum prover for predicting the parity with $\hat{c}_0\cdot \hat{c}_1$. Carefully working out this tension leads to the optimal quantitative tradeoffs that are expressed in our main theorem.

\subsection{Open Questions}

Our template protocol does not capture all known tests of quantumness. Two notable exceptions are the test of quantumness by Brakerski et al.~\cite{brakerski2021cryptographic} and the ones that operate in the random oracle model~\cite{brakerski2020simpler,yamakawa2022verifiable}. It may seem surprising that the test from~\cite{brakerski2021cryptographic} does not fit our framework, as indeed it is quite similar, though more demanding cryptographically, to the test from~\cite{KCVY2022}. It is possible that a small variation on the test could be made to fit our template, but we do not investigate this. Regarding the test from~\cite{yamakawa2022verifiable}, it operates in the random oracle model, which we did not attempt to incorporate in our framework. More importantly, it is non-interactive, which makes it unclear how our ideas could be used.

A test of quantumness that we believe should fall within a modified version of our framework is the application of the compiler from~\cite{kalai2022quantum} to the Magic Square game.
The Magic Square game is a nonlocal game which has the advantage that the optimal quantum winning probability is exactly $1$. 
This could lead to a test of quantumness with quantum completeness $1$, which is convenient for applications.
We leave this question open for future work.

By formalizing the common structure underlying many simple tests of quantumness,  our results suggest a hierarchy of ``capabilities'', that builds from a test of quantumness based on the non-rewinding property of quantum systems, to a test for a qubit, followed potentially by tests for certified randomness and delegated computation. 
An interesting conceptual question is to determine what is the minimal basis for achieving these capabilities, and whether the advanced ones can always, or almost always, be reduced to the more elementary ones, as seems to be the case in our framework. 
A specific direction that would be worthwhile investigating is whether certified randomness can be ``accumulated'' in a generic fashion from the family of protocols that we consider here.

\ifnum\anon=0
\paragraph{Acknowledgments.}
We thank Ilya Merkulov and Rotem Arnon-Friedman for discussions in the early stages of this work. 

\ifnum\llncs=1
The first author is supported by the Israel Science Foundation (Grant No.\ 3426/21), and by the European Union Horizon 2020 Research and Innovation Program via ERC Project REACT (Grant 756482). The second author is supported by the Knut and Alice Wallenberg Foundation through the Wallenberg Centre for Quantum Technology (WACQT). The third author is supported by the U.S. Department of Energy, Office of Science, Office of Advanced Scientific Computing Research, under the Accelerated Research in Quantum Computing (ARQC) program. The fifth author is supported by a grant from the Simons Foundation (828076, TV), MURI Grant FA9550-18-1-0161, AFOSR Grant FA9550-21-S-0001, and a research grant from the Center for New Scientists at the Weizmann Institute of Science.
\fi
\fi

\section{Preliminaries}

\subsection{Notation}

We use $\negl(\lambda)$ to denote any negligible function of $\lambda$, i.e.\ a function $f:\N\to\R_+$ such that $f(\lambda)p(\lambda)\to_{\lambda\to\infty} 0$ for all polynomials $p$. Given two strings $r_0,r_1$ we write $r_0\|r_1$ for their concatenation.

\subsection{Quantum Goldreich-Levin}

\noindent We state a quantum version of the Goldreich-Levin theorem, which is taken from~\cite{qgllemma}.

\begin{definition}\label{def:gl}
A \emph{quantum inner product} query (with bias $\varepsilon$) is a unitary transformation $U_{\mathrm{IP}}$ together with an auxiliary $m$-qubit quantum state $\ket{\psi}$ on $n+m+t$ qubits, or its inverse $U_{\mathrm{IP}}^{\dagger}$, such that $U_{\mathrm{IP}}$ satisfies the following two properties:
\begin{enumerate}
    \item There is a string $a\in \{0,1\}^n$ such that if $x\in\{0,1\}^{n}$ is chosen randomly according to the uniform distribution and the last qubit of $U_{\mathrm{IP}}\ket{x}\ket{\psi}\ket{0^{t}}$ is measured, the value $w\in\{0,1\}$ obtained is such that $\Pr(w=a\cdot x)\geq \frac{1}{2} + \varepsilon$.
    \item For any $x\in\{0,1\}^{n}$ and $y\in\{0,1\}^{t}$, the state of the first $n$ qubits of $U_{\mathrm{IP}}\ket{x}\ket{\psi}\ket{y}$ is $x$.
\end{enumerate}
\end{definition}

\begin{theorem}\label{thm:qgl}
There exists a quantum algorithm that returns the string $a$ with probability greater than or equal to $4\varepsilon^{2}$ using a $U_{\mathrm{IP}}$ query and a $U_{\mathrm{IP}}^{\dagger}$ query. The number of auxilliary qubit operation used by this procedure is $O(1)$.
\end{theorem}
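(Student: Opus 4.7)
The plan is to implement the standard Adcock--Cleve phase-kickback construction for quantum Goldreich--Levin. Intuitively, the oracle $U_{\mathrm{IP}}$ produces $a\cdot x$ in a designated output qubit with bias $\eps$; running $U_{\mathrm{IP}}$ on a uniform superposition over $x$ and phase-flipping the output qubit kicks a factor of $(-1)^{a\cdot x}$ back onto the $\ket{x}$-register, and a final Hadamard transform Fourier-samples this phase pattern, peaking at $a$.

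Concretely, the algorithm I would use is: (1) prepare $\frac{1}{\sqrt{2^n}}\sum_x\ket{x}\ket{\psi}\ket{0^t}$ by applying $H^{\otimes n}$ to the first register; (2) apply $U_{\mathrm{IP}}$; (3) apply Pauli $Z$ to the designated output qubit; (4) apply $U_{\mathrm{IP}}^\dagger$; (5) apply $H^{\otimes n}$ to the first register and measure in the computational basis, outputting the result. By property~2 of Definition~\ref{def:gl}, the first register is preserved throughout steps~(2)--(4), so beyond the two oracle queries this uses only Hadamards and a single $Z$, matching the claimed overhead.

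For the analysis, let $p(x)\in[0,1]$ denote the probability that measuring the output qubit of $U_{\mathrm{IP}}\ket{x}\ket{\psi}\ket{0^t}$ yields $a\cdot x$, so that $\Es{x}[p(x)]\ge \tfrac12+\eps$ by hypothesis. Decomposing the state after step~(2) along the output qubit and applying $Z$ shows that the state on the $\ket{x}$-branch becomes $(-1)^{a\cdot x}\bigl[(2p(x)-1)\,\ket{\Phi_x}+\sqrt{1-(2p(x)-1)^2}\,\ket{\Phi_x^\perp}\bigr]$, where $\ket{\Phi_x}=U_{\mathrm{IP}}\ket{x}\ket{\psi}\ket{0^t}$ and $\ket{\Phi_x^\perp}\perp \ket{\Phi_x}$. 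Since $U_{\mathrm{IP}}^\dagger$ is an isometry preserving the first register, step~(4) maps this to $(-1)^{a\cdot x}\ket{x}\bigl[(2p(x)-1)\ket{\psi}\ket{0^t}+\sqrt{1-(2p(x)-1)^2}\ket{\chi_x}\bigr]$ with $\ket{\chi_x}\perp \ket{\psi}\ket{0^t}$. After the Hadamards of step~(5), the coherent amplitude on $\ket{a}$ in the first register times $\ket{\psi}\ket{0^t}$ on the auxiliary registers equals $\Es{x}[2p(x)-1]\ge 2\eps$, and the contributions from the $\ket{\chi_x}$ components, being orthogonal to $\ket{\psi}\ket{0^t}$, only add to this; hence the probability of measuring $a$ is at least $(2\eps)^2=4\eps^2$.

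The subtle point, and the only place where care is needed, is the decomposition in step~(3): it is tempting to say that $Z$ on the output qubit kicks back $(-1)^{a\cdot x}$, but this is only accurate to the extent that the oracle actually produces $a\cdot x$. The correct statement is that $Z$ imprints the kickback phase on a component of amplitude $2p(x)-1$ (the bias on input $x$) while rotating the remaining amplitude into a direction that $U_{\mathrm{IP}}^\dagger$ sends orthogonal to the initial ancilla $\ket{\psi}\ket{0^t}$. This orthogonal piece cannot interfere with the ``good'' amplitude when we project onto $\ket{\psi}\ket{0^t}$ in the analysis, so linearity of expectation in the $\ket{x}$-sum converts the average bias $\eps$ directly into the $4\eps^2$ success probability, rather than some weaker moment of $p(x)$.
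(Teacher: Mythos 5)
The paper does not actually prove this statement: it is stated as a black box and credited to Adcock and Cleve (the reference \texttt{qgllemma}), so there is no in-paper argument to compare against. Your proposal is essentially a correct reconstruction of the Adcock--Cleve proof --- phase kickback via $U_{\mathrm{IP}}$, a $Z$ on the designated output qubit, uncompute with $U_{\mathrm{IP}}^\dagger$, then Fourier sample --- and the decomposition $Z\ket{\Psi_x}=(-1)^{a\cdot x}\bigl[(2p(x)-1)\ket{\Psi_x}+\sqrt{1-(2p(x)-1)^2}\,\ket{\Psi_x^\perp}\bigr]$ with $\ket{\Psi_x}$ the post-$U_{\mathrm{IP}}$ ancilla state is exactly right.

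One imprecision worth fixing: you assert that $U_{\mathrm{IP}}^\dagger$ ``preserves the first register,'' so that after step (4) the state on the $x$-branch still has first register $\ket{x}$. Definition~\ref{def:gl} only guarantees that $U_{\mathrm{IP}}$ leaves the first register intact on inputs of the form $\ket{x}\ket{\psi}\ket{y}$; it says nothing about $U_{\mathrm{IP}}^\dagger$ on the rotated vector $\ket{x}\otimes\ket{\Psi_x^\perp}$, whose image need not have first register $\ket{x}$. What actually holds, and suffices, is weaker: since $U_{\mathrm{IP}}$ is unitary, $\langle e_x \,|\, x'\rangle\ket{\psi}\ket{0^t} = (\ket{x}\otimes\ket{\Psi_x^\perp})^\dagger\, U_{\mathrm{IP}}\,\ket{x'}\ket{\psi}\ket{0^t} = \langle x|x'\rangle\,\langle\Psi_x^\perp|\Psi_{x'}\rangle = 0$ for every $x'$, where $\ket{e_x}:=U_{\mathrm{IP}}^\dagger(\ket{x}\otimes\ket{\Psi_x^\perp})$. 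Thus the ``bad'' vectors lie in the orthogonal complement of $\mathrm{span}\{\ket{x'}\ket{\psi}\ket{0^t}\}$, and since the final $H^{\otimes n}$ acts only on the first register it preserves that span; so the bad part remains orthogonal to $\ket{a}\ket{\psi}\ket{0^t}$ after step (5). The amplitude on $\ket{a}\ket{\psi}\ket{0^t}$ is therefore exactly $\Es{x}[2p(x)-1]\ge 2\eps$, the probability of observing $a$ in the first register is at least $4\eps^2$, and your conclusion stands; the justification should simply be phrased via isometry and orthogonality rather than register-preservation of $U_{\mathrm{IP}}^\dagger$.
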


\subsection{Jordan's lemma}

\begin{lemma}\label{lem:jordan}
Let $Q_0$ and $Q_1$ be two orthogonal projections on a (finite-dimensional) Hilbert space $\mH$. Then there is a decomposition $\mH=\oplus_i \mH_i$ where for each $i$, $\mH_i$ has dimension at most $2$ and furthermore, the decomposition is stabilized by both $Q_0$ and $Q_1$. In particular, whenever $\dim(\mH_i)=2$ we can find a basis of $\mH_i$ in which
\begin{align}
Q_0 \,=\,\begin{pmatrix} 1 & 0 \\ 0 & 0\end{pmatrix}\qquad\text{and}\qquad Q_1 = \begin{pmatrix} c_i^2 & c_is_i \\ c_is_i & s_i^2 \end{pmatrix}\;,
\end{align}
where $c_i = \cos(\alpha_i)$ and $s_i = \sin(\alpha_i)$, $\alpha_i \in [-\pi,\pi)$.
\end{lemma}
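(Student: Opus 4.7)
The plan is to diagonalize a Hermitian auxiliary operator whose eigenspaces are stabilized by $Q_0$ and $Q_1$, build explicit partner vectors for each eigenspace, and then read off the matrix forms at the end. Concretely, I would start by considering the operator $T = Q_0 Q_1 Q_0$. It is Hermitian and positive semi-definite, satisfies $T Q_0 = Q_0 T = T$, leaves $V := \mathrm{Im}(Q_0)$ invariant, and vanishes on $\mathrm{Ker}(Q_0)$. Diagonalize $T|_V$ with orthonormal eigenvectors $\{v_i\}$ and eigenvalues $\{\lambda_i\}\subset[0,1]$.

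For each $i$ with $\lambda_i\in(0,1)$, introduce the partner
\[
w_i \;=\; \frac{Q_1 v_i - \lambda_i v_i}{\sqrt{\lambda_i(1-\lambda_i)}}.
\]
A short calculation, using $Q_0 Q_1 Q_0 v_i = \lambda_i v_i$ and $Q_1^2 = Q_1$, shows $Q_0 w_i = 0$, $\|w_i\|=1$, $\langle v_i,w_i\rangle = 0$, and for $i\neq j$ that $\langle w_i,w_j\rangle = 0$; the latter reduces to the identity $\langle v_i,Q_1 v_j\rangle = \langle v_i, Q_0 Q_1 Q_0 v_j\rangle = \lambda_j\delta_{ij}$. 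It is then straightforward that $Q_1 v_i$ and $Q_1 w_i$ lie in $\mathrm{span}(v_i,w_i)$, so this is a 2-dimensional $Q_0$- and $Q_1$-invariant subspace, and writing $\lambda_i = c_i^2 = \cos^2\alpha_i$ the matrices of $Q_0$ and $Q_1$ in the basis $(v_i,w_i)$ take exactly the claimed form.

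For the remaining indices, $\lambda_i = 1$ gives $Q_1 v_i = v_i$ and $\lambda_i = 0$ gives $Q_1 v_i = 0$, so each such $v_i$ alone spans a 1-dimensional invariant subspace (corresponding to the degenerate values $\alpha_i = 0$ and $\alpha_i = \pi/2$). Set $\tilde V := \mathrm{span}\{v_i\} \oplus \mathrm{span}\{w_i : \lambda_i\in(0,1)\}$. The complement $\tilde V^\perp$ sits inside $\mathrm{Ker}(Q_0)$, and I would verify that it is $Q_1$-invariant by using the identity above to show that any $u \in \tilde V^\perp$ has $Q_1 u$ orthogonal to every $v_j$ and every $w_j$. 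Since $Q_0|_{\tilde V^\perp} = 0$ and $Q_1|_{\tilde V^\perp}$ is still a projection, $\tilde V^\perp$ splits as $(\mathrm{Im}(Q_1)\cap\tilde V^\perp)\oplus(\mathrm{Ker}(Q_1)\cap\tilde V^\perp)$, each of which can be decomposed arbitrarily into 1-dimensional invariant pieces.

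The point requiring the most care is showing that the partner family $\{w_i\}$ is orthonormal and that $\tilde V^\perp$ is genuinely $Q_1$-invariant; both boil down to the relation $\langle v_i, Q_1 v_j\rangle = \lambda_j\delta_{ij}$ together with $Q_1^2 = Q_1$. Once this is in hand the rest of the argument is essentially bookkeeping, and the matrix form in the basis $(v_i, w_i)$ emerges directly from the identities $Q_0 v_i = v_i$, $Q_0 w_i = 0$, $Q_1 v_i = \lambda_i v_i + \sqrt{\lambda_i(1-\lambda_i)}\,w_i$, and $Q_1 w_i = \sqrt{\lambda_i(1-\lambda_i)}\,v_i + (1-\lambda_i)w_i$.
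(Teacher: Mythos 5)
Your proof is correct. The paper states this as a standard background fact (Jordan's lemma) without giving a proof, so there is no argument in the paper to compare against; but your construction via $T = Q_0Q_1Q_0$ is the classical one, and every step checks out: the identity $\langle v_i, Q_1 v_j\rangle = \lambda_j\delta_{ij}$ does deliver both orthonormality of the $\{w_i\}$ and $Q_1$-invariance of $\tilde V^\perp$, the edge cases $\lambda_i\in\{0,1\}$ do force $Q_1 v_i\in\{0,v_i\}$ (via $\|Q_1 v_i\|^2 = \langle v_i,Q_1 v_i\rangle = \lambda_i$ together with $\|Q_0 Q_1 v_i\| = \lambda_i$), and the matrix of $Q_1$ in the basis $(v_i,w_i)$ is $\left(\begin{smallmatrix}\lambda_i & \sqrt{\lambda_i(1-\lambda_i)}\\ \sqrt{\lambda_i(1-\lambda_i)} & 1-\lambda_i\end{smallmatrix}\right)$, which matches the claimed form with $\alpha_i$ chosen so that $\cos\alpha_i\sin\alpha_i\ge 0$.
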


\section{Protocol template}

We introduce a general template that a test of quantumness may take. The template divides the test into two phases, \emph{Phase A} and \emph{Phase B}. Phase A is a ``setup phase'' in which the verifier and prover exchange classical information that, informally, guarantees that the prover has properly set up their workspace. The phase may include some tests, at the end of which the verifier may decide to abort the phase and either accept or reject the prover's actions outright. This possibility is captured by an outcome $\flag\in\{\acc,\rej,\cont\}$ that the verifier may return. Here, $\flag=\cont$ means that no decision has been taken and the protocol should proceed to Phase B. In Phase B, a single-bit challenge $m$ is issued by the verifier to the prover, who responds with a single-bit outcome $b$. The value $b$ returned by the prover is checked against a correct value $\hat{c}_m$ that is computed by the verifier as a function of its private randomness $\rand$, the transcript $\trans$ of the interaction in the first phase, and  the challenge bit $m$. This template is summarized in Figure~\ref{fig:template}.

\begin{figure}[!htbp]
  \centering
  \begin{gamespec}
Fix a security parameter $\lambda$. \\
  Phase A:
		\begin{enumerate}
      \setlength\itemsep{1pt}
    \item  The verifier and prover interact classically. At the end of the interaction, the verifier returns a $\flag\in\{\acc,\rej,\cont\}$.
		Let $\rand$ denote the verifier's random bits used in that phase, $\trans$ the transcript of the interaction, and $\ket{\psi_\trans}\in \mH_{\reg{PE}}$ the state of the prover and the environment at the end of the interaction.
		\end{enumerate}
		Phase B (executed only in case $\flag=\cont$):
		\begin{enumerate}
      \setlength\itemsep{1pt}
   \item The verifier sends a uniformly random challenge $m\in \{0,1\}$ to the prover.
	\item The prover returns a bit $b\in \{0,1\}$ to the verifier.
	\item The verifier accepts if and only if $(-1)^b = \hat{c}_m$, where $\hat{c}_m=\hat{c}_m(\rand,\trans)\in \{-1,1\}$ is a value computed by the verifier.
    \end{enumerate}
  \end{gamespec}
  \caption{Our template for a test of quantumness.}
  \label{fig:template}
\end{figure}

Our main results bound the maximum success probability of classical or quantum provers in any such protocol, assuming that a specific prediction task associated with the protocol is hard (informally, predicting the parity $\hat{c}_0\cdot \hat{c}_1$). To formulate the results we need to model the behavior of an arbitrary prover in the protocol. Such a prover is specified by a Hilbert space $\mH_\rP$, an initial quantum state $\ket{\psi}\in \mH_{\reg{P}} \otimes \mH_{\reg{E}}$, where $\mH_{\reg{E}}$ models an ``environment'' to which the prover does not have access (nor the verifier), and two families of measurements corresponding to the prover's actions in the two phases of the protocol (Figure~\ref{fig:template}). We emphasize that while Phase B naturally consists of a single round of interaction, Phase A may consist of multiple rounds of interaction. In this case, the actions of the prover in Phase A are described by multiple families of measurements, which incorporate any unitaries that the prover may apply to update its quantum state from one round to the next. Since our analysis will for the most part focus on the prover's behavior in the second phase, we abstract some of the details in the following definition.

\begin{definition}[Quantum Device]\label{def:device}
A \emph{quantum device} $\device$ is specified by:
\begin{itemize}
    \item Hilbert spaces $\mH_\rP$ and $\mH_\rE$, and a family of states $\{\ket{\psi_\trans}_{\reg{PE}}\,: \; \trans\in \mT\}$ on $\mH_\rP\otimes \mH_\rE$ together with a distribution $\mu$ on $\mT$. Here $\mT$ is used to denote the space of possible transcripts.
    \item For each $m\in \{0,1\}$, a projective measurement $\{\Pi^m_{b}\}$ on $\mH_{\reg{P}}$.\footnote{By Naimark's theorem the requirement that the measurement is projective is without loss of generality, up to enlarging the prover's Hilbert space with a single auxiliary qubit.}
\end{itemize}
\end{definition}

A quantum device can be used to specify a quantum prover in the template protocol as follows. The prover starts the protocol in a state $\ket{\psi} \in \mH_{\reg{P}}\otimes \mH_{\reg{E}}$. In Phase A, the verifier and the prover interact. This interaction results in a transcript $\trans$, obtained with probability $\mu(\trans)$, and a post-interaction state $\ket{\psi_\trans}_{\reg{PE}}$. Note that the fact that $(\trans, \ket{\psi_\trans})$ must be produced through a valid execution of Phase A of the protocol is implicit in the definition of a device. In the second phase, after having received $m$ the prover measures $\mH_P$ using $\{\Pi^{m}_{b}\}$ and returns the obtained outcome $b$. This is without loss of generality, as the measurement operators may incorporate any $m$-dependent unitary that the prover applies to their state after having received the verifier's challenge bit.

\section{Soundness analysis}

In this section we show that general statements on the soundness of the protocol template can be derived from a simple assumption about the hardness of predicting the parity $\hat{c}_0\cdot\hat{c}_1$ of the correct answers on challenges $m=0$ and $m=1$. Specifically, we establish \emph{classical soundness} (a bound on the maximum probability of success of any classical prover in the protocol), \emph{quantum soundness} (the same, for quantum provers), and the property of being a \emph{test for a qubit} (informally, that any quantum prover that succeeds with near-optimal probability must do so by performing measurements in Phase B such that the measurement applied for $m=0$ and the one applied for $m=1$ are close to maximally anti-commuting).

\subsection{The parity adversary}

We start by showing that any device that succeeds in the protocol template with some probability can be turned into an algorithm for predicting the parity of the verifier's two decision bits $\hat{c}_0$ and $\hat{c}_1$ in Phase B of the protocol. Although rather simple, observing such a transformation in the general setup of the protocol template is arguably our main conceptual contribution. Specifically, we construct the following.

\begin{definition}
Let $\delta,\kappa:\N\to [0,1]$.
We say that a pair $\mA = (\mA_1,\mA_2)$ of (classical or quantum) polynomial-time algorithms is a \emph{(classical or quantum) parity adversary with advantage $(\kappa,\delta)$} on (some instantiation of) the template protocol if the following hold. Firstly, $\mA_1$ and $\mA_2$ have the following structure:
\begin{itemize}
\item $\mA_1$ is a family of algorithms for the prover in an interaction with the verifier for Phase A of the template protocol. In particular, $\mA_1$ is initialized in a quantum state $\ket{\psi}\in \mH_{\reg{PE}}$ and completes the interaction by returning a transcript $\trans$ and a post-interaction state $\ket{\psi_\trans}\in \mH_{\reg{PE}}$.
\item $\mA_2$ takes as input the output $(\trans,\ket{\psi_{\trans}})$ of $\mA_1$. It returns a bit $b\in \{0,1\}$. ($\mA_2$ does not interact with the verifier.)
\end{itemize}
Secondly, it holds that the interaction of $\mA_1$ with the verifier results in $\flag=\rej$ with probability at most $\kappa(\lambda)$, and furthermore
\[\Es{\trans\leftarrow\mA_1(1^\lambda)} \Big|\Pr_{b\leftarrow \mA_2(1^\lambda,\trans,\ket{\psi_{\trans}})}\big( (-1)^ b= \hat{c}_0 \cdot\hat{c}_1\big) -\frac{1}{2} \Big|\,\leq\,\delta(\lambda)\;, \]
where the outer expectation is taken over $\trans$ generated from $\mA_1$ and is taken conditioned on $\flag=\cont$, the inner probability is taken over $b$ generated from $\mA_2$ on input $\trans$ and $\ket{\psi_\trans}$, and for $m\in\{0,1\}$, $\hat{c}_m$ is computed from $\trans$ and the verifier's private coins $\rand$ used when interacting with $\mA_1$ as in Phase B of the template protocol.
\end{definition}

Let $\device$ be a device for the protocol template (Definition~\ref{def:device}). Let $\mA_1$ execute Phase A of the protocol by interacting the device with the verifier, resulting in a transcript $\trans$ obtained with probability $\mu(\trans)$ and a post-execution state $\ket{\psi_{\trans}} \in \mH_{\reg{PE}}$. Let $\mA_2$ perform the following actions. $\mA_2$ first applies the projective measurement $\{\Pi^0_b\}$ on $\mH_{\reg{P}}$ to obtain a $b_0\in \{0,1\}$. Then $\mA_2$ applies the projective measurement $\{\Pi^1_b\}$ on $\mH_{\reg{P}}$ to obtain a  $b_1\in \{0,1\}$. Finally, $\mA_2$ returns $b_0\oplus b_1$. This construction is summarized in Figure~\ref{fig:parity-adv}.

\begin{figure}[!htbp]
  \centering
  \begin{gamespec}
	Fix a device $\device$ for the protocol template.
    \begin{itemize}
      \setlength\itemsep{1pt}
    \item Algorithm $\mA_1$: Execute Phase A of the protocol template to obtain $(\trans,\ket{\psi_{\trans}})$ and a flag $\flag$ returned by the verifier.
    \item Algorithm $\mA_2$: If $\flag=\cont$,
		\begin{enumerate}
		      \setlength\itemsep{1pt}
		\item Measure $\mH_{\reg{P}}$ using the two-outcome measurement $\{\Pi^0_0,\Pi^0_1\}$. Let $b_0\in \{0,1\}$ be the outcome obtained.
    \item Measure using the two-outcome measurement $\{\Pi^1_0,\Pi^1_1\}$. Let $b_1\in \{0,1\}$ be the outcome obtained.
\item Return $b_0\oplus b_1$.
\end{enumerate}
\end{itemize}
  \end{gamespec}
  \caption{A parity adversary.}
  \label{fig:parity-adv}
\end{figure}

Given a choice of random bits $\rand$ and a transcript $\trans$, define projections on $\mH_{\reg{P}}$ by
\begin{equation}\label{eq:def-q-a}
Q_0 \,=\, \Pi^0_{\hat{c}_0}\qquad\text{and}\qquad Q_1 \,=\, \Pi^1_{\hat{c}_1}\;.
\end{equation}

\begin{lemma}\label{lem:xor-suc}
Conditioned on $\flag=\cont$ and $\trans$ having been obtained after the execution of $\mA_1$, the parity adversary defined in Figure~\ref{fig:parity-adv} returns an outcome $b\in\{0,1\}$ such that
\begin{align*}
 p_{\mathrm{xor}} &= p_{\mathrm{xor}}(\rand,\trans)\\
&:= \Pr\big( (-1)^b=\hat{c}_0\cdot \hat{c}_1\big)\\
&=\big\| ( Q_{1}Q_{0}+(\Id-Q_{1})(\Id-Q_{0}) )\ket{{\psi}_\trans}\big\|^2\;.
\end{align*}
\end{lemma}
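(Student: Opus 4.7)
The plan is to unfold the adversary's action as a sequence of two projective measurements applied to $\ket{\psi_\trans}$ and then match its success event to the two Kraus-like branches $Q_1Q_0$ and $(\Id-Q_1)(\Id-Q_0)$. First I would rewrite the success condition: since $\hat{c}_m\in\{-1,1\}$, write $\hat{c}_m=(-1)^{c_m}$ with $c_m\in\{0,1\}$, so that $(-1)^{b_0\oplus b_1}=\hat{c}_0\cdot\hat{c}_1$ is equivalent to $b_0\oplus b_1=c_0\oplus c_1$. This event decomposes into the disjoint union of two sub-events: either $(b_0,b_1)=(c_0,c_1)$, or $(b_0,b_1)=(1-c_0,1-c_1)$. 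Note that $\Pi^0_{c_0}=Q_0$, $\Pi^1_{c_1}=Q_1$, $\Pi^0_{1-c_0}=\Id-Q_0$, and $\Pi^1_{1-c_1}=\Id-Q_1$, by the definition in~\eqref{eq:def-q-a}.

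Next I would compute each branch probability using the Born rule for sequential projective measurements. The probability that the first measurement yields $b_0=c_0$ and the second then yields $b_1=c_1$ equals $\|Q_0\ket{\psi_\trans}\|^2\cdot\bigl\|Q_1\tfrac{Q_0\ket{\psi_\trans}}{\|Q_0\ket{\psi_\trans}\|}\bigr\|^2=\|Q_1Q_0\ket{\psi_\trans}\|^2$. Similarly the complementary branch contributes $\|(\Id-Q_1)(\Id-Q_0)\ket{\psi_\trans}\|^2$. Summing,
\begin{equation*}
p_{\mathrm{xor}}=\|Q_1Q_0\ket{\psi_\trans}\|^2+\|(\Id-Q_1)(\Id-Q_0)\ket{\psi_\trans}\|^2.
\end{equation*}

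The last step is to recognize that this sum of squared norms coincides with the squared norm of the sum of the two vectors $Q_1Q_0\ket{\psi_\trans}$ and $(\Id-Q_1)(\Id-Q_0)\ket{\psi_\trans}$. Expanding the latter yields the two squared-norm terms plus $2\operatorname{Re}\bra{\psi_\trans}Q_0Q_1(\Id-Q_1)(\Id-Q_0)\ket{\psi_\trans}$. Because $Q_1$ is a projection, $Q_1(\Id-Q_1)=0$, so the cross term vanishes identically. This gives the claimed identity
\begin{equation*}
p_{\mathrm{xor}}=\bigl\|(Q_1Q_0+(\Id-Q_1)(\Id-Q_0))\ket{\psi_\trans}\bigr\|^2,
\end{equation*}
completing the proof. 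There is no real obstacle here: the argument is a short computation, and the only conceptual point worth flagging is the orthogonality from $Q_1(\Id-Q_1)=0$ that makes the two success branches interference-free.
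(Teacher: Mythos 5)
Your proof is correct and takes essentially the same route as the paper's: the same decomposition of the success event into the two branches, the Born rule for sequential projective measurements giving $\|Q_1Q_0\ket{\psi_\trans}\|^2$ and $\|(\Id-Q_1)(\Id-Q_0)\ket{\psi_\trans}\|^2$, and the orthogonality $Q_1(\Id-Q_1)=0$ to kill the cross term when combining the two into a single squared norm. You simply spell out the last step more explicitly than the paper, which compresses it to ``using that $Q_1$ and $\Id-Q_1$ are orthogonal, the lemma follows.''
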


\begin{proof}
It holds that $(-1)^{b_0\oplus b_1}=\hat{c}_0\cdot \hat{c}_1$ if and only if either ($(-1)^{b_0}=\hat{c}_0$ and $(-1)^{b_1}=\hat{c}_1$) or ($(-1)^{b_0}\neq\hat{c}_0$ and $(-1)^{b_1}\neq\hat{c}_1$). By definition, the probability of the first event is  $\|Q_1Q_0\ket{{\psi}_\trans}\|^2$ and the probability of the second event is $\|(\Id-Q_1)(\Id-Q_0)\ket{{\psi}_\trans}\|^2$. Using that $Q_1$ and $(\Id-Q_1)$ are orthogonal, the lemma follows.
\end{proof}

\subsection{Classical and quantum soundness}
\label{sec:soundness}

In this section we show that a precise bound on the classical and quantum soundness of the protocol template can be obtained from the following assumption.

\begin{assumption}\label{ass:parity}
There is a function $s:\N\times [0,1] \to [0,1]$ such that for any (classical or quantum) polynomial-time parity adversary $\mA$ with advantage $(\kappa,\delta)$, it holds that $\delta(\lambda) \leq s(\lambda,\kappa(\lambda))$.
\end{assumption}

Note that Assumption~\ref{ass:parity} will in general be conditional on the hardness of some computational problem, such as the Learning with Errors problem. In the examples from Section~\ref{sec:applications} we will give various instantiations of the assumption. As we will see, given a concrete protocol that fits the protocol template it is generally quite straightforward to show that the assumption holds (often, it will hold for a function $s(\lambda)=\negl(\lambda)$, and the possibility for executing a test in Phase A will not even be used). However, the conclusion that we obtain on quantum soundness of the protocol template will comparatively be quite strong.

Let $\device$ be a (classical or quantum) polynomial-time device for the protocol template. Recall the definition of the projections $Q_0$ and $Q_1$ in~\eqref{eq:def-q-a} (which implicitly depend on the verifier's private randomness $\rand$  and the transcript $\trans$ from Phase A).
By definition the probability that $\mathfrak{D}$ succeeds in Phase B of the protocol, conditioned on $\flag=\cont$ and $m\in\{0,1\}$, is
\begin{equation}\label{eq:def-pm}
p_{m}=\lVert Q_{m} \ket{{\psi}_\trans}\rVert^{2}\;.
\end{equation}
Thus the probability that the device succeeds in Phase B of the protocol is
\begin{align}
\frac{1}{2}(p_{0}+p_{1})=\frac{1}{2}\lVert Q_{0} \ket{{\psi}_\trans}\rVert^{2} + \frac{1}{2}\lVert Q_{1} \ket{{\psi}_\trans}\rVert^{2}\;.
\end{align}
By Jordan's lemma (Lemma~\ref{lem:jordan}) there is a decomposition $\mH_{\reg{PE}}=\oplus_i \mH_i$ such that for all $i$, $\dim(\mH_i)\leq 2$ and moreover $\mH_i$ is invariant under both $Q_0$ and $Q_1$. 
For $\gamma \in \{1,2\}$ let $\mS_\gamma$ be the collection of indices $i$ such that $\dim(\mH_i)=\gamma$ and let $\mS=\mS_1\cup\mS_2$.

Fix an index $i\in\mS$. 
Let $\ket{u_{i}}$ be the normalized projection of $\ket{{\psi}_\trans}$ on $\mH_i$, and let $t_{i}=|\bra{u_{i} }{\psi}_\trans\rangle|^{2}$. 
If $i\in \mS_1$ then the restrictions of $Q_0$ and $Q_1$ to $\mH_i$ take the form
\begin{equation*}
Q_0 \,=\, \begin{pmatrix} \cos^2 \alpha_i \end{pmatrix}\quad\text{ and }\quad Q_1\,=\,\begin{pmatrix} \cos^2 \beta_i \end{pmatrix}\;,
\end{equation*}
where $\alpha_i,\beta_i\in \{0,\frac{\pi}{2}\}$. 
If $i\in \mS_2$ then there exists a state in $\mH_{i}$ which is orthogonal to $\ket{u_{i}}$, denote it by $\ket{u^{\perp}_{i}}$.
The pair $\{\ket{u_{i}},\ket{u^{\perp}_{i}}\}$ is an orthonormal basis for $\mH_{i}$ in which $Q_0$ and $Q_1$ take the form
\begin{equation}\label{eq:def-q0}
Q_0\,=\,\begin{pmatrix}
\cos^{2}(\alpha_{i}) & \cos(\alpha_{i})\sin(\alpha_{i})\\
\cos(\alpha_{i})\sin(\alpha_{i}) & \sin^{2}(\alpha_{i})
\end{pmatrix}
\end{equation}
and
\begin{equation}\label{eq:def-q1}
 Q_1\,=\,\begin{pmatrix}
\cos^{2}(\beta_{i}) & \cos(\beta_{i})\sin(\beta_{i})\\
\cos(\beta_{i})\sin(\beta_{i}) & \sin^{2}(\beta_{i})
\end{pmatrix}\end{equation}
respectively, for some $\alpha_{i},\beta_{i}\in [-\pi/2,\pi/2)$.\footnote{Without loss of generality, both $Q_0$ and $Q_1$ have rank exactly $1$ in $\mH_i$. In all other cases, the $2$-dimensional space $\mH_i$ can be further decomposed as a sum of two invariant $1$-dimensional spaces.}
With these notations, starting from~\eqref{eq:def-pm} one easily verifies that
\begin{align}
    p_{0} &= \sum_{i }t_{i}\cos^{2}(\alpha_{i})\;,\notag\\
    p_{1} &=\sum_{i}t_{i}\cos^{2}(\beta_{i})\;.\label{eq:p0-ti}
\end{align}
We summarize our findings so far in the following claim.

\begin{nclaim}\label{claim:probs}
The probability that the device $\device$ succeeds in the protocol template, conditioned on $(\rand,\trans)$ and $\flag=\cont$ having been obtained in Phase A, is
\begin{align}
\frac{1}{2}\big(p_{0}+p_{1})&= \sum_{i}\frac{t_{i}}{2}(\cos^{2}(\alpha_{i})+\cos^{2}(\beta_{i}))\;.\footnotemark
\label{eq:p-succ}
\end{align}
Furthermore, the parity adversary derived from $\device$ as in Figure~\ref{fig:parity-adv} has advantage (also conditioned on $\rand$ and $\trans$)
\begin{align}
\delta &=\Big|p_\mathrm{xor} - \frac{1}{2}\Big|\;.
\label{eq:delta-pxor}
\end{align}
where
\begin{equation}
p_\mathrm{xor} = \sum_{i}t_{i}\cos^{2}(\alpha_{i}-\beta_{i})
\label{eq:p-xor}
\end{equation}

\end{nclaim}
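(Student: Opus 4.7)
The first identity~\eqref{eq:p-succ} will follow immediately by averaging the two expressions already recorded in~\eqref{eq:p0-ti}, and~\eqref{eq:delta-pxor} is a direct restatement of the definition of a parity adversary's advantage combined with Lemma~\ref{lem:xor-suc}. The substantive statement to prove is therefore the formula~\eqref{eq:p-xor} for $p_\mathrm{xor}$.

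My plan is to start from Lemma~\ref{lem:xor-suc}, which expresses $p_\mathrm{xor}$ as $\|A\ket{\psi_\trans}\|^2$ with $A := Q_1 Q_0 + (\Id - Q_1)(\Id - Q_0)$. Because each Jordan block $\mH_i$ is invariant under both $Q_0$ and $Q_1$, it is also invariant under $A$, so the decomposition $\mH_{\reg{PE}} = \oplus_i \mH_i$ diagonalizes the norm computation:
\begin{equation*}
\|A\ket{\psi_\trans}\|^2 \,=\, \sum_i t_i \,\bigl\| A|_{\mH_i} \ket{u_i} \bigr\|^2 .
\end{equation*}
The claim then reduces to showing that $\bigl\| A|_{\mH_i}\ket{u_i}\bigr\|^2 = \cos^2(\alpha_i - \beta_i)$ for every $i$, which I plan to establish block by block.

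The key step, and the one that carries all the content, is a short two-dimensional calculation for $i \in \mS_2$. Writing $Q_0 = \ket{v_\alpha}\bra{v_\alpha}$ and $Q_1 = \ket{v_\beta}\bra{v_\beta}$ with $\ket{v_\alpha} = \cos(\alpha_i)\ket{u_i} + \sin(\alpha_i)\ket{u_i^\perp}$ and $\ket{v_\beta}$ defined analogously, the identity $\langle v_\beta | v_\alpha\rangle = \cos(\alpha_i - \beta_i)$ yields $Q_1 Q_0 = \cos(\alpha_i - \beta_i)\ket{v_\beta}\bra{v_\alpha}$; the orthogonal complements $\ket{v_\alpha^\perp}$ and $\ket{v_\beta^\perp}$ satisfy the same inner-product relation, so analogously $(\Id - Q_1)(\Id - Q_0) = \cos(\alpha_i - \beta_i)\ket{v_\beta^\perp}\bra{v_\alpha^\perp}$. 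Summing the two pieces, $A|_{\mH_i}$ factors as $\cos(\alpha_i - \beta_i)$ times the unitary that maps the orthonormal basis $\{\ket{v_\alpha}, \ket{v_\alpha^\perp}\}$ to $\{\ket{v_\beta}, \ket{v_\beta^\perp}\}$; since unitaries preserve norms, $\bigl\| A|_{\mH_i}\ket{u_i}\bigr\|^2 = \cos^2(\alpha_i - \beta_i)$ as required.

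For $i \in \mS_1$ I would just inspect the four possibilities $(\alpha_i,\beta_i) \in \{0,\pi/2\}^2$: in each case $A|_{\mH_i}$ is the scalar $\cos^2(\alpha_i)\cos^2(\beta_i) + \sin^2(\alpha_i)\sin^2(\beta_i)$, and a direct check shows this equals $\cos^2(\alpha_i - \beta_i)$. Summing the contributions of all blocks then gives~\eqref{eq:p-xor}. I do not anticipate any genuine obstacle; the only conceptual content is recognizing the factorization of $Q_1 Q_0 + (\Id - Q_1)(\Id - Q_0)$ on each block as a cosine times a unitary, which immediately turns the norm computation into the claimed trigonometric expression.
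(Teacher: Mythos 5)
Your proof is correct, and for the central two-dimensional computation it takes a cleaner route than the paper. The paper expands $Q_1Q_0 + (\Id-Q_1)(\Id-Q_0)$ in coordinates via~\eqref{eq:def-q0} and~\eqref{eq:def-q1}, uses angle-difference identities to recognize the result as the matrix $\cos(\alpha_i-\beta_i)$ times the rotation by $\alpha_i-\beta_i$, and then computes the norm of the image of $\ket{u_i}=(1,0)^T$ directly. Your argument factors the operator \emph{abstractly} as $\cos(\alpha_i-\beta_i)\bigl(\ket{v_\beta}\bra{v_\alpha}+\ket{v_\beta^\perp}\bra{v_\alpha^\perp}\bigr)$, i.e.\ a cosine times a basis-exchange unitary, so the norm computation is immediate and no trigonometric manipulation beyond $\lb v_\beta | v_\alpha\rb = \cos(\alpha_i-\beta_i)$ is needed. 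This is the same structural fact the paper's matrix calculation eventually reveals, but you reach it without writing out the $2\times 2$ matrix, which is tidier and more transparent.

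One small imprecision in the $\mS_1$ case: the quantity you need to evaluate is $\|A|_{\mH_i}\ket{u_i}\|^2 = A^2$ (with $A$ the scalar restriction), not $A$ itself, so strictly you should check $A^2 = \cos^2(\alpha_i-\beta_i)$. This holds anyway because for $\alpha_i,\beta_i\in\{0,\pi/2\}$ the scalar $A = \cos^2\alpha_i\cos^2\beta_i + \sin^2\alpha_i\sin^2\beta_i$ takes values only in $\{0,1\}$, so $A = A^2$, but it is worth stating that idempotence explicitly rather than silently conflating $A$ with $A^2$.
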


\begin{proof}
The first part of the claim follows directly from~\eqref{eq:p0-ti}. The second part follows by direct calculation using the expression from Lemma~\ref{lem:xor-suc}. Specifically, starting from the expressions in~\eqref{eq:def-q0} and~\eqref{eq:def-q1} we obtain
\begin{align*}
Q_{1}Q_{0}+(\Id-Q_{1})(\Id-Q_{0})
&=
\begin{pmatrix}
c^2 \hat{c}^2 + s^2 \hat{s}^2 + 2c \hat{c}s\hat{s} &  \hat{c}^2 c s + s^2 \hat{c}\hat{s} - \hat{s}^2 c s - c^2 \hat{c}\hat{s}\\
c^2 \hat{c}\hat{s} + \hat{s}^2 cs - s^2 \hat{c}\hat{s} - \hat{c}^2 cs & s^2 \hat{s}^2 + c^2 \hat{c}^2 + cs\hat{c}\hat{s}\end{pmatrix}\\
&=\begin{pmatrix}
\cos^2(\alpha_i-\beta_i) & \cos(\alpha_i-\beta_i)\sin(\alpha_i-\beta_i) \\ -\cos(\alpha_i-\beta_i)\sin(\alpha_i-\beta_i) & \cos^2(\alpha_i-\beta_i)\end{pmatrix}
\;,
\end{align*}
where for the middle expression we used the shorthand $c=\cos(\alpha_i)$, $s=\sin(\alpha_i)$, $\hat{c}=\cos(\beta_i)$ and $\hat{s}=\sin(\beta_i)$ and for the last line we used the trigonometric identities
\begin{align*}
 \cos(\alpha_i-\beta_i) &=  \cos(\alpha_i)\cos(\beta_i)+\sin(\alpha_i)\sin(\beta_i)\;,\\
\sin(\alpha_i-\beta_i) &= \sin(\alpha_i)\cos(\beta_i)-\cos(\alpha_i)\sin(\beta_i)\;.
\end{align*}
This allows us to verify that $\|(Q_{1}Q_{0}+(\Id-Q_{1})(\Id-Q_{0}))\ket{u_i}\|^2 = \cos^2(\alpha_i-\beta_i)$, establishing the claim.
\end{proof}

\begin{theorem}[Classical and quantum soundness]\label{thm:main}
Suppose that Assumption~\ref{ass:parity} holds for some function $s(\lambda,\kappa)$. Then the maximum probability with which a classical (resp. quantum) polynomial-time prover which succeeds in Phase A of the protocol template with probability at least $1-\kappa$ may succeed in Phase B is $\frac{3}{4}+\frac{1}{2}\,s(\lambda,\kappa)$ (resp. $\cos^2 \frac{\pi}{8} + \,s(\lambda,\kappa)$).
\end{theorem}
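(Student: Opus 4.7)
The plan is to reduce both bounds to Assumption~\ref{ass:parity} by instantiating the parity adversary from Figure~\ref{fig:parity-adv} and then using either a union bound (for the classical case) or a Cauchy--Schwarz/Jensen argument applied to the Jordan-block decomposition of Claim~\ref{claim:probs} (for the quantum case). In both cases, the constructed adversary's Phase~A rejection probability equals the device's (which is at most $\kappa$ by hypothesis), so Assumption~\ref{ass:parity} yields $\mathbb{E}[|p_{\mathrm{xor}} - 1/2|] \leq s(\lambda, \kappa)$, where the expectation is over $(\rand, \trans)$ conditional on $\flag = \cont$.

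For the classical case, let $p_m = p_m(\rand,\trans)$ denote the prover's Phase~B success probability on challenge $m$. Because a classical Phase~B prover can be executed on $m=0$ and on $m=1$ from duplicates of the internal state, the parity adversary can simply output $b_0 \oplus b_1$, and a union bound gives $p_{\mathrm{xor}}(\rand,\trans) \geq p_0 + p_1 - 1$ pointwise. Taking expectation and combining with $\mathbb{E}[p_{\mathrm{xor}}] - 1/2 \leq s(\lambda,\kappa)$, we obtain $2\,\mathbb{E}[(p_0+p_1)/2] - 3/2 \leq s(\lambda,\kappa)$, i.e.\ the overall success probability is at most $3/4 + s(\lambda,\kappa)/2$.

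For the quantum case, we invoke Claim~\ref{claim:probs} together with the identity $\cos^2\alpha + \cos^2\beta = 1 + \cos(\alpha+\beta)\cos(\alpha-\beta)$ to obtain, in each Jordan block~$i$, the inequality $\tfrac12(\cos^2\alpha_i + \cos^2\beta_i) \leq \tfrac12 + \tfrac12\sqrt{\cos^2(\alpha_i-\beta_i)}$ (maximizing over $\alpha_i+\beta_i$). Summing against the weights $t_i$ and applying Jensen for the concave square root yields the pointwise bound $\tfrac12(p_0+p_1) \leq \tfrac12 + \tfrac12\sqrt{p_{\mathrm{xor}}(\rand,\trans)}$. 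Averaging over $(\rand,\trans)\mid\flag=\cont$ and using Jensen a second time to move the expectation inside the square root gives $\mathbb{E}[\text{success}] \leq \tfrac12 + \tfrac12\sqrt{\mathbb{E}[p_{\mathrm{xor}}]} \leq \tfrac12 + \tfrac12\sqrt{\tfrac12 + s(\lambda,\kappa)}$, where we used $\mathbb{E}[p_{\mathrm{xor}}] - 1/2 \leq \mathbb{E}[|p_{\mathrm{xor}}-1/2|] \leq s(\lambda,\kappa)$. The elementary bound $\sqrt{1+2s} \leq 1+s$ together with $\tfrac12 + \tfrac{1}{2\sqrt{2}} = \cos^2(\pi/8)$ finally gives $\mathbb{E}[\text{success}] \leq \cos^2(\pi/8) + s(\lambda,\kappa)/(2\sqrt{2}) \leq \cos^2(\pi/8) + s(\lambda,\kappa)$, as claimed.

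The main point to be careful about is consistently handling the conditioning on $\flag = \cont$ and the averaging over $(\rand,\trans)$: the pointwise inequalities must survive expectation, and the two applications of Jensen for the concave square root are where the (only) nontrivial loss could arise, but in our setting it is absorbed into the constant $1/(2\sqrt{2})$ in front of $s$. All other ingredients, in particular Jordan's lemma, the explicit forms of $Q_0$ and $Q_1$ in each block, and the formulas for $p_0$, $p_1$, and $p_{\mathrm{xor}}$, are already supplied by Claim~\ref{claim:probs}, and the polynomial-time nature of the constructed parity adversary is immediate from the efficiency of the device's measurements.
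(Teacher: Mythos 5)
Your proof is correct. The classical part is essentially the paper's argument: your union bound $p_{\mathrm{xor}}\geq p_0+p_1-1$ is the diagonal content of the Jordan-block computation once $\mS_2=\emptyset$. The quantum part, however, takes a genuinely different route. The paper first establishes and then applies the trigonometric inequality $\cos^2\alpha+\cos^2\beta\leq|2\cos^2(\alpha-\beta)-1|+2\cos^2(\pi/8)$ of Appendix~\ref{appendix:trig}, sums against $t_i$, and substitutes $\delta$. You instead use the product-to-sum identity $\cos^2\alpha+\cos^2\beta=1+\cos(\alpha+\beta)\cos(\alpha-\beta)\leq 1+\sqrt{\cos^2(\alpha-\beta)}$ together with two applications of Jensen's inequality for the concave square root (first over Jordan blocks, then over $(\rand,\trans)$), arriving at the clean bound $\mathbb{E}\big[\tfrac12(p_0+p_1)\big]\leq\tfrac12+\tfrac12\sqrt{\mathbb{E}[p_{\mathrm{xor}}]}$. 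This is essentially the computation the paper itself carries out later for Proposition~\ref{prop:qubit} (cf.\ Eq.~\eqref{eq:conv-ineq}), so your argument unifies the two proofs; it also gives the marginally sharper constant $s/(2\sqrt{2})$. One further point worth flagging: the paper's final step asserts $\sum_i t_i\big|\cos^2(\alpha_i-\beta_i)-\tfrac12\big|\leq\delta=|p_{\mathrm{xor}}-\tfrac12|$, but by the triangle inequality the left-hand side dominates, not the other way around, so as written that step seems to require additional justification (e.g.\ that an adversary could condition on the Jordan block, which is not obviously efficient or possible without knowing $\rand$). Your Jensen-based route only ever bounds the aggregate $p_{\mathrm{xor}}=\sum_i t_i\cos^2(\alpha_i-\beta_i)$ that the Figure~\ref{fig:parity-adv} adversary actually achieves, and therefore sidesteps this subtlety entirely.
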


\begin{proof}
For the proof, we fix a device $\device$ and use the notation introduced towards the proof of Claim~\ref{claim:probs}.

We first show classical soundness. In this case, $\mS_2 = \emptyset$ and thus by~\eqref{eq:p-xor},
\[ p_{\mathrm{xor}} = \sum_{i: \alpha_i = \beta_i } t_{i}\;.\]
It follows that, from~\eqref{eq:p-succ},
\begin{align*}
p_{0}+p_{1}&=\sum_{i}t_{i}(\cos^{2}(\alpha_{i})+\cos^{2}(\beta_{i}))\\
&\leq \sum_{i:\alpha_i=\beta_i} 2t_{i} + \sum_{i:\alpha_i\neq \beta_i} t_i = 2p_\mathrm{xor} + (1 - p_\mathrm{xor}) \\
&\leq \frac{3}{2}+\delta\;,
\end{align*}
Here the second line uses $\sum_i t_i=1$ and~\eqref{eq:p-xor} and the third uses~\eqref{eq:delta-pxor}. We then have that the probability with which the classical prover succeeds in Phase B is $\frac{1}{2}(p_0 + p_1) \leq \frac{3}{4} + \frac{1}{2}\,s(\lambda, \kappa),$ since by Assumption~\ref{ass:parity}, $\delta \leq s(\lambda, \kappa).$

Next we show quantum soundness. 
We use the following inequality (derived in Appendix~\ref{appendix:trig}), valid for any $\alpha,\beta$: 
\begin{equation}\label{eq:num-ineq}
\cos^2(\alpha) + \cos^2(\beta) \leq \left| 2\cos^2(\alpha - \beta) - 1 \right| + 2 \cos^2(\pi/8)
\end{equation}
It follows that, starting from~\eqref{eq:p-succ},
\begin{align*}
\frac{1}{2} (p_{0}+p_{1}) &=\frac{1}{2}\sum_{i}t_{i}(\cos^{2}(\alpha_{i})+\cos^{2}(\beta_{i}))\\
&\leq \sum_i t_i \left| \cos^2(\alpha_i-\beta_i)-\frac{1}{2} \right| + \cos^2 \frac{\pi}{8}\\
&\leq \delta + \cos^2\frac{\pi}{8}\;,
\end{align*}
where the second line follows from~\eqref{eq:num-ineq} together with the fact that $\sum_i t_i = 1$ and the third follows from~\eqref{eq:p-xor}.
We therefore have that $\frac{1}{2}(p_0 + p_1) \leq s(\lambda,\kappa) + \cos^2 \frac{\pi}{8}$ as claimed.
\end{proof}

\subsection{Qubit test}
\label{sec:qubit}

In this section we go beyond quantum soundness and show that the protocol template can be used to certify that any prover which succeeds with probability close to the quantum optimum of $\omega=\cos^2 \frac{\pi}{8}$ ``has a qubit.'' The key technical step is given by the following proposition.

\begin{proposition}\label{prop:qubit}
Let $\device$ be a polynomial-time quantum device that is such that $\Pr(\flag=\rej)\leq \kappa$ in Phase A of the protocol template, and that succeeds with probability $\omega- \eps$ in Phase B (conditioned on $\flag=\cont$ having been returned in Phase A). Let $(t_i)$, $(\alpha_i)$, $(\beta_i)$ be defined as in the start of Section~\ref{sec:soundness}. Then there is an $\eta = O(\eps+\sqrt{s})$ such that for all but a fraction at most $\eta$ of transcripts $\trans$ (such that $\flag=\cont)$ and indices $i$, as measured by $(t_i)$, it holds that
\[ \big| \alpha_i  \pm \frac{\pi}{8}\big| \,\leq\, \eta\qquad\text{and}\qquad \big|\beta_i+\alpha_i\big| \,\leq\,\eta\;,\]
for some choice of sign $\pm$ (depending on $i$).
\end{proposition}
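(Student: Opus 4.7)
The plan is to promote the soundness bound of Theorem~\ref{thm:main} from an inequality on the average success probability to a pointwise statement about the Jordan block angles $(\alpha_i,\beta_i)$. Combining the hypothesis that the success probability of Phase B (conditioned on $\flag=\cont$) equals $\omega-\eps$ with the bound $\delta\leq s$ on any parity adversary supplied by Assumption~\ref{ass:parity}, and retracing the chain of inequalities used in the quantum soundness proof (now averaged over transcripts), one obtains
\[
\Es{\trans\sim\mu}\sum_i t_i(\trans)\, f\bigl(\alpha_i(\trans),\beta_i(\trans)\bigr) \;\leq\; 2\eps + 2s\;,
\]
where
\[
f(\alpha,\beta) \;=\; 2\omega-\cos^2\alpha-\cos^2\beta+\bigl|2\cos^2(\alpha-\beta)-1\bigr| \;\geq\; 0
\]
is the pointwise slack in the trigonometric inequality~\eqref{eq:num-ineq}. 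The proposition therefore reduces to analyzing the zero set of $f$ and extracting a quantitative lower bound in terms of the distance to it.

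Next I classify the zeros of $f$. Substituting $p=\alpha+\beta$, $q=\alpha-\beta$ recasts $f=0$ as $\cos p \cos q = \cos(\pi/4)+|2\cos^2 q-1|$, and a short case analysis on the sign of $2\cos^2 q-1$ forces $\cos p=\pm 1$ together with $\cos q=\pm 1/\sqrt{2}$. Modulo the period $\pi$ of $\cos^2$, the only zeros on $[-\pi/2,\pi/2)^2$ are thus $(\pi/8,-\pi/8)$ and $(-\pi/8,\pi/8)$. For the quantitative lower bound I expand around the first zero with $u=\alpha-\pi/8$, $v=\beta+\pi/8$, handling the non-smooth absolute value separately as $|2\cos^2(\alpha-\beta)-1|=2|u-v|+O(|u-v|^3)$. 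A direct Taylor expansion yields
\[
f(\alpha,\beta) \;=\; \tfrac{1}{\sqrt{2}}(u^2+v^2) \,+\, 2|u-v| \,+\, \tfrac{1}{\sqrt{2}}(u-v) \,+\, O\bigl((|u|+|v|)^3\bigr)\;,
\]
and a case split on the sign of $u-v$ shows that the two linear pieces combine to a non-negative contribution, so $f\geq c(u^2+v^2)$ for an absolute $c>0$ in a neighborhood of the zero. An analogous expansion applies around $(-\pi/8,\pi/8)$, and by compactness $f$ is bounded below by a positive constant outside the union of these two neighborhoods. Hence there is a function $d(\alpha,\beta)$ measuring the angular distance to the closer of the two zeros such that $f(\alpha,\beta)\geq c\cdot d(\alpha,\beta)^2$ globally.

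Combining the two ingredients gives $\Es{\trans}\sum_i t_i\, d(\alpha_i,\beta_i)^2 = O(\eps+s)$. Markov's inequality applied to the joint distribution $\mu\otimes t_i$ then bounds the fraction of pairs $(\trans,i)$ on which $d$ exceeds any chosen threshold; balancing the allowed fraction against the squared threshold yields a single $\eta$ of appropriate scaling in $\eps$ and $s$ such that outside a fraction $\eta$, $d(\alpha_i,\beta_i)\leq\eta$. Unpacking $d$ recovers the two disjunctive conditions of the proposition: the sign $\pm$ in $|\alpha_i\pm\pi/8|\leq\eta$ records which optimum is closer, and $|\beta_i+\alpha_i|\leq\eta$ follows since at either optimum $\beta=-\alpha$.

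The hardest step is the local analysis of $f$ at its two singular zeros, since the absolute value in the definition of $f$ is non-smooth precisely at these points; this is why the linear contributions $2|u-v|$ and $\tfrac{1}{\sqrt{2}}(u-v)$ must both be retained alongside the quadratic $\tfrac{1}{\sqrt{2}}(u^2+v^2)$, and a careful sign analysis is needed to see that the linear pieces cannot conspire to cancel against the (subleading) quadratic one. A secondary subtlety, already implicit in the proof of Theorem~\ref{thm:main}, is that upgrading the parity adversary to the per-block advantage $\sum_i t_i|\cos^2(\alpha_i-\beta_i)-\tfrac12|$, rather than the averaged quantity $|\sum_i t_i(\cos^2(\alpha_i-\beta_i)-\tfrac12)|$, requires implementing the Jordan-block projection efficiently from $Q_0$ and $Q_1$; this should be flagged when formalizing the reduction.
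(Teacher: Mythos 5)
Your approach is genuinely different from the paper's, and conceptually cleaner: rather than the paper's sequence of ad hoc steps (bound the mass of blocks with $\alpha_i$ or $\beta_i$ outside $[-3\pi/16,3\pi/16]$ via an auxiliary inequality, then invoke strict concavity of $\cos^2$ on the remaining blocks, then exploit near-tightness of Jensen), you package the entire trigonometric inequality~\eqref{eq:num-ineq} into a non-negative slack function $f$ and argue via its zero set. The classification of the zeros of $f$ and the local Taylor analysis around them (including the careful handling of the non-smooth $|u-v|$ term and the observation that the two linear pieces combine to something non-negative) are correct, and compactness does give the global lower bound $f \geq c\cdot d^2$. This gives a uniform geometric picture that the paper's proof lacks.

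However, there is a substantive gap in your first step, and I do not think it should be dismissed as a ``secondary subtlety.'' To establish
\[
\Es{\trans}\sum_i t_i\, f(\alpha_i,\beta_i) \;=\; 2\eps + 2\,\Es{\trans}\sum_i t_i\Big|\cos^2(\alpha_i-\beta_i)-\tfrac{1}{2}\Big| \;\leq\; 2\eps + 2s
\]
you need the bound $\Es{\trans}\sum_i t_i |\cos^2(\alpha_i-\beta_i)-\tfrac{1}{2}| \leq s$, with the absolute value \emph{inside} the sum over Jordan blocks. But a parity adversary only gives $\Es{\trans}\big|\sum_i t_i(\cos^2(\alpha_i-\beta_i)-\tfrac{1}{2})\big| \leq s$, which is \emph{weaker}: blocks where $\cos^2(\alpha_i-\beta_i)<\tfrac12$ contribute negatively to the adversary's advantage and can cancel. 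You flag this and say it is ``already implicit in the proof of Theorem~\ref{thm:main},'' but that observation does not repair your argument---it only means your approach inherits an obligation rather than discharging it. To push the absolute value inside one would need a more elaborate parity adversary (e.g.\ using repeated alternating measurements to estimate the block angle and decide whether to flip the output), which is nontrivial to implement in polynomial time and to analyze. By contrast, the paper's proof of this proposition deliberately \emph{avoids} the per-block absolute value: the bound~\eqref{eq:sprime-bound} on $\sum_{i\in\mS'} t_i$ uses only the one-sided aggregate $\sum_i t_i(\cos^2(\alpha_i-\beta_i)-\tfrac12) \leq s$, and the later steps~\eqref{eq:conv-ineq} and Claim~\ref{claim:temp} use only the two-sided aggregate $\big|\sum_i t_i\cos^2(\alpha_i-\beta_i)-\tfrac12\big|\leq s$. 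So the paper's route, while less elegant, requires strictly weaker input from Assumption~\ref{ass:parity}. Either close the gap by constructing and analyzing the stronger parity adversary, or restructure your argument to, like the paper, separate out the large-angle blocks before comparing against the aggregate parity advantage.

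A minor further point: you should check that your final Markov step actually delivers $\eta = O(\eps+\sqrt{s})$ as stated in the proposition. From $\Es{\trans}\sum_i t_i\, d(\alpha_i,\beta_i)^2 = O(\eps+s)$, requiring both the excluded fraction and the angular deviation to be bounded by the same $\eta$ yields $\eta = O\big((\eps+s)^{1/3}\big)$, which is not the same scaling. The paper's own Markov step has an analogous calibration issue, but you should not propagate it silently; make the choice of threshold and excluded-fraction explicit.
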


\begin{proof}
Recall the set $\mS$, and let
\[\mS' = \{i: \alpha_i \notin [-3\pi/16,3\pi/16]\} \cup \{i: \beta_i \notin [-3\pi/16,3\pi/16]\}\;.\]
We start by bounding $\sum_{i\in \mS'} t_i$. For this, we first observe the following inequality. For $\alpha \in [-\pi/2,-3\pi/16]\cup[3\pi/16,\pi/2]$ it holds that
\begin{equation}\label{eq:num-ine2}
\cos^2(\alpha-\beta) - \frac{1}{2} \,\geq\, 100\Big(\frac{\cos^2\alpha+\cos^2\beta}{2} -0.851\Big)\;.
\end{equation}
This is because for the range of $\alpha$ indicated, the right-hand side is always less than $-0.5$ (while the left-hand side is always at least $-0.5$), as can be verified by direct calculation. Using both~\eqref{eq:num-ineq} and~\eqref{eq:num-ine2} it follows from~\eqref{eq:p-succ} that
\begin{align*}
100(\omega - \eps) &\leq \sum_i t_i \Big(\cos^2(\alpha_i-\beta_i)-\frac{1}{2}\Big) + 85.1\sum_{i\in\mS'} t_i + 100\omega \sum_{i\notin \mS'} t_i\;.
\end{align*}
Using Assumption~\ref{ass:parity} and $\sum_i t_i=1$ we get that
\begin{equation}\label{eq:sprime-bound}
\sum_{i\in \mS'} t_i \,\leq\, \frac{ 1}{\omega-0.851}\,\eps + O(s(\lambda,\kappa)) \,=\,O(\eps+s)\;.
\end{equation}

Since the function $x\mapsto \cos^2(x)$ is strictly concave on $[-3\pi/16,3\pi/16]$, for $\alpha,\beta\in [-3\pi/16,3\pi/16]$ it holds that
\begin{equation}
\zeta \cos^2(\alpha) + (1 - \zeta) \cos^2(\beta)\, =\, \zeta \cos^2(\alpha) + (1 - \zeta) \cos^2(-\beta) \leq \cos^2(\zeta \alpha - (1 - \zeta)\beta)\;,
\end{equation}
for all $\zeta \in [0,1]$. Taking $\zeta = \frac{1}{2}$, we have
\begin{equation}\label{eq:conv-a}
\frac{1}{2}\Big(\cos^2\alpha + \cos^2 \beta\Big)\, \leq\, \cos^2\Big(\frac{\alpha-\beta}{2}\Big)\;.
\end{equation}
It follows that
\begin{equation}\label{eq:conv}
\frac{1}{2}\sum_{i\notin \mS'} t_i (\cos^{2}(\alpha_{i})+\cos^{2}(\beta_{i})) \,\leq\,\sum_{i\notin \mS'} t_i \cos^2\Big(\frac{\alpha_i-\beta_i}{2}\Big)\;.
\end{equation}
Starting from expression~\eqref{eq:p-succ} we then deduce that,
\begin{align}
\omega - \eps &= \frac{1}{2}\sum_i t_i \big( \cos^2(\alpha_i) + \cos^2(\beta_i) \big) \notag\\
&\leq   \sum_{i\notin \mS'} t_i \cos^2 \Big( \frac{\alpha_i - \beta_i}{2}\Big)+O(\eps+s)\notag\\
&\leq  \sum_{i} \frac{t_i}{2} \big( 1 + \big|\cos\big( \alpha_i - \beta_i\big)\big|\big)+O(\eps+s) \notag\\
&\leq \frac{1}{2}  + \frac{1}{2}\sqrt{\sum_{i} t_i \cos^2 \big(\alpha_i-\beta_i)}+ O(\eps+s)\;,\label{eq:conv-ineq}
\end{align}
where the second line uses~\eqref{eq:sprime-bound} and~\eqref{eq:conv}, the third line adds non-negative terms for $i\in \mS'$ and uses the trigonometric identity $\cos^2(x/2)=\frac{1}{2}(1+\cos(x))$, and the last is by concavity of the square root function. By Assumption~\ref{ass:parity} the right-hand side is at most
\[\frac{1}{2}+\frac{1}{2\sqrt{2}} + O(\eps+s) + O\big(\sqrt{s}\big)\,=\, \omega + O\big(\eps+\sqrt{s}\big)\;.\]
Hence all inequalities in the derivation of~\eqref{eq:conv-ineq} must be tight up to $O(\eps+\sqrt{s})$. We show that this implies the following bounds.

\begin{nclaim}\label{claim:temp} The following inequalities hold:
\begin{align}
\sum_i t_i \Big( |\alpha_i - \beta_i|-\frac{\pi}{4}\Big)^2 &= O\big(\eps+\sqrt{s}\big)\;. \label{eq:tight-1}\\
\sum_i t_i \big(\alpha_i + \beta_i\big)^2 &= O\big(\eps+\sqrt{s}\big)\;.\label{eq:tight-2}
\end{align}
\end{nclaim}

\begin{proof}
We first prove~\eqref{eq:tight-1}. For this we exploit near-tightness of the application of Jensen's inequality on the last line of~\eqref{eq:conv-ineq}. This immediately implies that
\begin{equation*}
\sum_{i} t_i \Big( \big| \cos(\alpha_i-\beta_i)\big| - \sqrt{\sum_{i'} t_{i'} \cos^2 \big(\alpha_{i'}-\beta_{i'})}\Big)^2 \,=\, O\big( \eps+\sqrt{s}\big)\;.
\end{equation*}
By definition, $\Big| \sum_{i'} t_{i'} \cos^2 \big(\alpha_{i'}-\beta_{i'}) - \frac{1}{2} \Big| \leq s$, hence
\begin{equation*}
\sum_{i} t_i \Big( \big| \cos(\alpha_i-\beta_i)\big| - \frac{1}{\sqrt{2}}\Big)^2 \,=\, O\big( \eps+\sqrt{s}\big)\;.
\end{equation*}
Using that for $x\in [0,\pi]$,
\[ \Big| \cos(x)-\frac{1}{\sqrt{2}} \Big| \geq \frac{1}{2}\Big|x-\frac{\pi}{4}\Big|\;,\]
Eq.~\eqref{eq:tight-1} follows. To show~\eqref{eq:tight-2}, we similarly use near-tightness in the second line of~\eqref{eq:conv-ineq}, using the strict concavity expressed in~\eqref{eq:conv-a}.
\end{proof}

Applying Markov's inequality to the conclusions of Claim~\ref{claim:temp}, the proposition follows.
\end{proof}

To formulate the qubit test we introduce the observables
\begin{equation}\label{eq:def-s}
 S_0 \,=\, 2Q_0-\Id\quad\text{and } S_1 = 2Q_1-\Id\;.
\end{equation}
The next theorem states that the observables $S_0$ and $S_1$ must be close to anti-commuting, as measured by the squared norm of the anti-commutator when evaluated on the state $\ket{{\psi_\trans}}$.

\begin{theorem}\label{thm:qubit}
Suppose that Assumption~\ref{ass:parity} holds for some function $s(\lambda,\kappa)$.
Let $\device$ be a polynomial-time quantum device that  is such that $\Pr(\flag=\rej)\leq \kappa$ in Phase A of the protocol template, and that succeeds with probability $\omega- \eps$ in Phase B (conditioned on $\flag=\cont$ having been returned in Phase A). Then, on average over $(\rand,\trans)$ and $\ket{\psi_{\trans}}$ generated in Phase A of the protocol and conditioned on $\flag=\cont$, it holds that
\begin{equation}\label{eq:ac-bound}
 \bra{{\psi_\trans}} \{ S_0,S_1\}^2 \ket{{\psi_{\trans}}} \,=\, O\big(\eps+\sqrt{s}\big)\;,
\end{equation}
where $\{S_0,S_1\}=S_0S_1+S_1S_0$.
\end{theorem}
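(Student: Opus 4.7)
The plan is to leverage Proposition~\ref{prop:qubit} directly: inside each Jordan block I can compute $\{S_0,S_1\}$ explicitly and observe that the angular bounds from the proposition force it to be small on the ``good'' blocks, while the ``bad'' blocks contribute only $O(\eta)$ to the total weight.  I would fix a transcript $\trans$ with $\flag=\cont$, use the Jordan decomposition $\mH_{\reg{PE}}=\oplus_i \mH_i$ of Section~\ref{sec:soundness}, and write $\ket{\psi_\trans}=\sum_i \sqrt{t_i}\ket{u_i}$.  Since each $\mH_i$ is invariant under $Q_0$ and $Q_1$ and therefore under $S_0$ and $S_1$,
\begin{equation*}
\bra{\psi_\trans}\{S_0,S_1\}^2\ket{\psi_\trans}\,=\,\sum_i t_i\,\bra{u_i}\{S_0,S_1\}^2\ket{u_i}\;.
\end{equation*}

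Next I would compute the anti-commutator block by block.  Starting from the matrix forms~\eqref{eq:def-q0}--\eqref{eq:def-q1}, a direct calculation in the basis $\{\ket{u_i},\ket{u_i^\perp}\}$ of each $2$-dimensional block gives
\begin{equation*}
S_0 \,=\, \cos(2\alpha_i)\,\sigma_Z + \sin(2\alpha_i)\,\sigma_X\;,\qquad S_1 \,=\, \cos(2\beta_i)\,\sigma_Z + \sin(2\beta_i)\,\sigma_X\;,
\end{equation*}
and the Pauli anti-commutation relations yield $\{S_0,S_1\}=2\cos(2(\alpha_i-\beta_i))\,\Id_{\mH_i}$, hence $\{S_0,S_1\}^2=4\cos^2(2(\alpha_i-\beta_i))\,\Id_{\mH_i}$.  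In a $1$-dimensional block $S_0$ and $S_1$ are scalars in $\{\pm 1\}$ so $\bra{u_i}\{S_0,S_1\}^2\ket{u_i}=4$.

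The third step is the split into good and bad indices via Proposition~\ref{prop:qubit}.  On a good index $i$ (necessarily $2$-dimensional, since neither $0$ nor $\pi/2$ is within $\eta$ of $\pm\pi/8$ for small enough $\eta$), the proposition gives $|\alpha_i - s_i\pi/8|\leq\eta$ and $|\beta_i+\alpha_i|\leq\eta$ for some sign $s_i\in\{\pm 1\}$, which by the triangle inequality forces $|(\alpha_i-\beta_i)-s_i\pi/4|\leq 3\eta$, and hence $\cos^2(2(\alpha_i-\beta_i))=\cos^2(s_i\pi/2+O(\eta))=O(\eta^2)$ by a Taylor expansion.  On a bad index I use the trivial bound $\bra{u_i}\{S_0,S_1\}^2\ket{u_i}\leq 4$, valid because $S_0,S_1$ are reflections.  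Averaging over $\trans$ and using the proposition to bound the expected $(t_i)$-weight of bad $(\trans,i)$ pairs by $\eta$, I obtain
\begin{equation*}
\Es{\trans}\bra{\psi_\trans}\{S_0,S_1\}^2\ket{\psi_\trans}\,\leq\,O(\eta^2)+4\eta\,=\,O(\eta)\,=\,O(\eps+\sqrt{s})\;,
\end{equation*}
which is the claimed bound.

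\textbf{Main obstacle.}  Once Proposition~\ref{prop:qubit} is in hand the argument is largely mechanical; the one genuinely algebraic step is the per-block identity $\{S_0,S_1\}=2\cos(2(\alpha_i-\beta_i))\,\Id_{\mH_i}$, which follows from writing the two reflections in Pauli form and is the place where the $2$-dimensionality of the Jordan blocks is essential.  All of the real conceptual work --- translating the near-optimality of the success probability into angular closeness of $(\alpha_i,-\beta_i)$ to $\pm\pi/8$ --- has already been done in the proof of Proposition~\ref{prop:qubit}, so the qubit test comes out as a clean corollary.
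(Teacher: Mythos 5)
Your proposal is correct and follows essentially the same route as the paper's proof: both compute the anti-commutator block-by-block inside the Jordan decomposition, obtain $\{S_0,S_1\}=2\cos(2(\alpha_i-\beta_i))\,\Id$ on each $2$-dimensional block, and then invoke Proposition~\ref{prop:qubit} to split the weight $(t_i)$ into a ``good'' fraction where the angles force $\cos^2(2(\alpha_i-\beta_i))=O(\eta^2)$ and a ``bad'' fraction of total weight $O(\eta)$ on which the trivial bound $4$ is used. The only cosmetic differences are that you write $S_0,S_1$ in Pauli form rather than as explicit $2\times 2$ matrices, you explicitly handle the $1$-dimensional blocks (the paper leaves them implicit in the bad set), and you chain triangle inequalities to land on $|(\alpha_i-\beta_i)-s_i\pi/4|\leq 3\eta$ rather than using the paper's intermediate quadratic bound $\cos^2(x)\leq(x-\pi/2)^2$.
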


Before giving the proof of the theorem, which follows from Proposition~\ref{prop:qubit} by direct calculation, we motivate it by discussing its implications. A first, rather immediate consequence of a bound such as~\eqref{eq:ac-bound} is that there exists an isometry $V:\mH_\reg{PE} \to \C^2 \otimes \mH'_{\reg{E}}$ such that, under the isometry, $S_0 \simeq_{O(\sqrt{\eps})} \sigma_Z \otimes \Id_{\mH_{\reg{E}}'}$ and $S_1 \simeq_{O(\sqrt{\eps})} \sigma_X\otimes \Id_{\mH_{\reg{E}}'}$.
Here, $\simeq$ measures distance in the appropriate state-dependent norm, and $\sigma_Z$ and $\sigma_X$ are the canonical Pauli matrices. This statement is standard in the self-testing literature. (For a proof and more details, see for example~\cite[Lemma 2.9]{gheorghiu2019computationally}. A more extensive discussion appears in~\cite[Lecture 2]{vidick2020course}.) This statement formalizes the intuition that any successful prover in the qubit test must ``have a qubit:'' the operations that it performs in Phase B of the protocol template are essentially equivalent, up to isometry, with measurements in the standard ($\sigma_Z)$ or Hadamard ($\sigma_X$) basis on a qubit.

This statement, of being a ``test of a qubit,'' is powerful. In~\cite{brakerski2021cryptographic} a similar statement is used to obtain certifiable randomness. In~\cite{AM23} it is shown that the specific bound shown in Theorem~\ref{thm:qubit} suffices to obtain precise quantitative bounds on the amount of randomness generated in an execution of our template protocol, as well as on the accumulation of randomness through multiple sequential executions. As a consequence of our work, their results also imply that certified randomness accumulation can be achieved using any of the concrete instantiations given in Section~\ref{sec:applications}. 
In~\cite{metger2021self} this is expanded in a test for an EPR pair, which can lead to protocols for device-independent quantum key distribution as in~\cite{metger2021device}. In~\cite{gheorghiu2019computationally} the qubit test forms the basis for a protocol for classical delegation of quantum computation. While we do not work out any of these applications, our results open the door to developing them based on any protocol that follows our template (such as the examples given in Section~\ref{sec:applications}) by using known techniques.

\begin{proof}[Proof of Theorem~\ref{thm:qubit}]
We use the notation introduced at the start of Section~\ref{sec:soundness}. Fix an index $i \in \mS_2$. Then
\begin{equation}
S_0 \,=\, \begin{pmatrix} \cos(2\alpha_i) & \sin(2\alpha_i) \\ \sin(2\alpha_i) & -\cos(2\alpha_i) \end{pmatrix}\quad\text{and}\quad S_1 \,=\, \begin{pmatrix} \cos(2\beta_i) & \sin(2\beta_i) \\ \sin(2\beta_i) & -\cos(2\beta_i) \end{pmatrix}\;.
\end{equation}
This allows us to compute
\[ S_0S_1 + S_1 S_0 \,=\, 2 \cos(2(\alpha_i-\beta_i)) \Id\;.\]
In particular,
\[ \bra{u_i} \{S_0,S_1\}^2 \ket{u_i} \leq 4 \cos^2 \big(2(\alpha_i-\beta_i)\big)\;.\]
Using $\cos^2(x)\leq (x-\pi/2)^2$ for $x\in[\pi/4,3\pi/4]$ it follows that whenever $2(\alpha_i-\beta_i)\in[-\pi/4,3\pi/4]$,
\[ \bra{u_i} \{S_0,S_1\}^2 \ket{u_i} \leq 16\Big( \big|\alpha_i-\beta_i\big|-\frac{\pi}{4}\Big)^2\;.\]
Using Proposition~\ref{prop:qubit} all but a fraction $O(\eps+\sqrt{s})$ of indices $i$ satisfy this condition, and furthermore for these $i$ the right-hand side is $O(\eps+\sqrt{s})$. The theorem follows.
\end{proof}

\section{Applications}
\label{sec:applications}

We give three applications. First we consider the protocol from~\cite{KCVY2022}. This protocol is based on trapdoor claw-free functions, for which we recall the definition in the next section. Next we introduce a slightly simplified version of that protocol, and show that its proof of security follows in a completely direct way from our methods. Third we consider the general compiler from~\cite{kalai2022quantum} and apply it to the CHSH game.

\subsection{Trapdoor claw-free functions}

The main cryptographic primitive upon which the concrete protocols we describe rely on is the \emph{trapdoor claw-free function family} (TCF), which is defined as follows.

\begin{definition}\label{def:tcf}
Let $\lambda$ be a security parameter, $\mK$ a set of keys, and $\mX_{k}$ and $\mY_{k}$ finite sets for each $k\in \mK$. A family of functions
\begin{align*}\mathcal{F}=\{f_{k}:\mX_{k}\to \mY_{k}\}_{k\in \mK}\end{align*}
is called a trapdoor claw-free (TCF) function family if the following conditions hold:
\begin{enumerate}
    \item \textbf{Efficient Function Generation.} There exists an efficient probabilistic algorithm $\Gen$ which given a security parameter $\lambda$ in unary generates a key $k\in \mK$ and the associated trapdoor $t_{k}$:
    \begin{align*}(k,t_{k})\leftarrow \Gen(1^{\lambda})\end{align*}
    \item \textbf{Trapdoor Injective Pair.} For all keys $k\in \mK$, the following conditions hold.
    \begin{enumerate}
        \item Injective pair: There exists a perfect matching $R_{k}$ on $\mX_k$ such that for all $(x_{0}, x_{1})\in R_k$, $f_{k}(x_{0})=f_{k}(x_{1})$. 
        \item Trapdoor: There exists an efficient deterministic algorithm $\Inv_{k}$ such that for all $y\in \mY_{k}$ and $(x_{0},x_{1})$ such that $f_{k}(x_{0})=f_{k}(x_{1})=y$, $\Inv(t_{k},y)=(x_{0},x_{1})$. 
    \end{enumerate}
    \item \textbf{Claw-free.} For any non-uniform probabilistic polynomial time Turing machine $\mathcal{A}$,
    \begin{align*}\Pr\big( f_{k}(x_{0})=f_{k}(x_{1})\wedge x_{0}\neq x_{1} | (x_{0},x_{1})\leftarrow \mathcal{A}(k)\big)\,=\,\negl(\lambda)\;,\end{align*}  where the probability is over both the choice of $k$ and the random coins of $\mathcal{A}$.
    \item \textbf{Efficient Superposition.} There exists a polynomial-time quantum algorithm that on input a key $k$ prepares the state
    \begin{equation}\label{eq:claw-state}
		\frac{1}{\sqrt{|\mX_{k}|}}\sum_{x\in \mX_{k}}{\ket{x} \ket{f_{k}(x)}}\;.\end{equation}
\end{enumerate}
\end{definition}

Note that the third condition, claw-freeness, may be required to hold with regard to quantum adversaries, or only with regard to classical adversaries, depending on the application.

Trapdoor claw-free functions can be constructed based on a diversity of concrete assumptions. 
In~\cite{KCVY2022} two constructions are given, based on Rabin's function and based on the Decisional Diffie-Hellman problem. 
Neither assumption is secure against quantum adversaries (only classical ones). 
In~\cite{brakerski2021cryptographic} a variant called ``noisy'' trapdoor claw-free function is constructed based on the Learning with Errors (LWE) problem. 
Furthermore, in~\cite{brakerski2020simpler} this was extended to Ring-LWE, which is expected to be more efficient than standard LWE. 
It is straightforward to verify that the noisy type of TCF can also be used in our protocol; for the sake of clarity we describe the protocols using simpler ``non-noisy'' TCFs.

\subsection{The $3$-round protocol from~\cite{KCVY2022}}

In Figure~\ref{fig:kcvy} we recall the $3$-round ($6$-message) test of quantumness from~\cite{KCVY2022}, which we refer to as the \emph{KCVY protocol}. The protocol depends on a TCF family, of which we recall the definition in Definition~\ref{def:tcf}.

\begin{figure}[!htbp]
  \centering
  \begin{gamespec}
Fix a TCF family $(\Gen,\Inv)$ and a security parameter $\lambda$.
    \begin{enumerate}
      \setlength\itemsep{1pt}
    \item  The verifier samples $(k,t_{k})\leftarrow \Gen(1^{\lambda})$ and sends $k$ to the prover.
     The prover returns a string $y\in\mY_k$ to the verifier.
\item The verifier computes $(x_0,x_1)\leftarrow\Inv(t_k,y)$. The verifier decides to perform either of the following with probability $\frac{1}{2}$ each:
\begin{enumerate}
\item (Preimage test:) The verifier requests a preimage. The prover responds with a string $x$. The verifier accepts if and only if $x\in\{x_0,x_1\}$.
\item (Equation test:)
\begin{enumerate}
    \item The verifier chooses $r\in\{0,1\}^n$ uniformly at random and sends $r$ to the prover.
		 The prover responds with $d\in \{0,1\}^n$.
		\item The verifier sends a uniformly random $m\in \{0,1\}$ to the prover.
    The prover responds with  a bit  $b\in \{0,1\}$.
    \item The verifier accepts if and only if $(-1)^b=\hat{c}_m$, where $\hat{c}_m$ is defined in Table~\ref{probs1}.
    \end{enumerate}
		    \end{enumerate}
    \end{enumerate}
  \end{gamespec}
  \caption{The KCVY protocol.}
  \label{fig:kcvy}
\end{figure}

\begin{table}[ht]
  \centering
  \begin{tabular}{|l|r|r|r|r|}
    \hline
    $\ket{\phi_{r,d}}$ &
    $r\cdot (x_{0}\oplus x_{1})$ &  $d\cdot (x_{0}\oplus x_{1})$ &
    $\hat{c}_0 \;(m=0)$ & $\hat{c}_1\; (m=1)$\\
    \hline
    $\ket{0}$  & 0 & 0 & +1 & +1 \\
    $\ket{1}$  & 0 & 0 & -1 & -1\\
    $\ket{+}$  & 1 & 0 & +1 & -1\\
    $\ket{-}$  & 1 & 1 & -1 & +1\\
    \hline
  \end{tabular}
\caption{Here, $\ket{\phi_{r, d}}$ denotes the prover's state after step (b)i. in the protocol. The $\hat{c}$ column describes the $c$ that will likely be sent from an honest error-free prover in the equation test of the KCVY protocol, in case where $m=0$ or $m=1$.}
\label{probs1}
\end{table}

In~\cite{KCVY2022} it is shown that there exists an honest quantum prover that succeeds with probability $1$ in the preimage test, and with probability $\omega = \cos^2 \frac{\pi}{8}$ in the equation test. We complement their result by showing that any quantum polynomial-time prover that succeeds with probability at least $1-\kappa$ in the preimage test can succeed in the equation test with probability at most $\omega + O(\sqrt{\kappa})+\negl(\lambda)$. We do this by applying our main result, Theorem~\ref{thm:main}, to the KCVY protocol. The main observation needed is that for a given $y,d$ the product $\hat{c}_0\cdot \hat{c}_1 = (-1)^{r\cdot (x_0\oplus x_1)}$. By the claw-freeness property, this quantity should be hard to predict for a uniformly random $r$ --- as long as there is also a means of recovering $x_0$ or $x_1$, which is guaranteed by the preimage test. This allows us to establish Assumption~\ref{ass:parity} for this protocol and therefore use Theorem~\ref{thm:main}.

\begin{theorem}[Quantum soundness of the KCVY protocol]\label{thm:kcvy}
Suppose that a quantum polynomial-time prover succeeds in the preimage test with probability $p=1-\kappa$, and in the equation test with probability $q$. Then 
\[ q \,\leq\, \cos^2 \frac{\pi}{8} + \frac{\sqrt{\kappa}}{2} + \negl(\lambda)\;.\]
\end{theorem}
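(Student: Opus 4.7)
The plan is to apply Theorem~\ref{thm:main} to the KCVY protocol. I would first cast the protocol of Figure~\ref{fig:kcvy} as an instance of the template of Figure~\ref{fig:template}: Phase A consists of the verifier sending $k$ and receiving $y$, followed by the verifier's internal coin flip which either executes the preimage test (setting $\flag\in\{\acc,\rej\}$) or proceeds by sending $r$ and receiving $d$ (setting $\flag=\cont$); Phase B is the single-round $m\to b$ exchange. Under this mapping $\Pr(\flag=\rej) = \kappa/2$ and the probability of Phase B success conditioned on $\flag=\cont$ is exactly $q$, so it suffices to verify Assumption~\ref{ass:parity} with a function of the form $s(\lambda,\tau)=\sqrt{\tau/2}+\negl(\lambda)$.

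The driving structural observation, read off directly from Table~\ref{probs1}, is that $\hat{c}_0\cdot \hat{c}_1 = (-1)^{r\cdot(x_0\oplus x_1)}$; the parity adversary's task is therefore precisely to compute the Goldreich--Levin inner-product bit of the secret string $a := x_0\oplus x_1$ on a uniform challenge $r$. Given a quantum parity adversary $\mathcal{A}=(\mathcal{A}_1,\mathcal{A}_2)$ with Phase A rejection probability at most $\kappa/2$ and parity advantage $\delta$, I would construct a quantum claw-finder $\mathcal{A}_{\mathrm{claw}}(k)$ as follows: (i) simulate the prover through its first move to obtain $y$ and the post-message state $\ket{\psi_y}$; (ii) define a unitary $U_{\mathrm{IP}}$ that, on $\ket{r}\ket{\psi_y}\ket{0^t}$, coherently simulates the prover's production of $d$ from $r$ followed by $\mathcal{A}_2$'s parity prediction, writing the guess bit into the final register while keeping $r$ intact on the first register (as required by Definition~\ref{def:gl}); (iii) invoke the quantum Goldreich--Levin algorithm of Theorem~\ref{thm:qgl} with one forward and one inverse query of $U_{\mathrm{IP}}$ to recover $a$ with probability at least $4\delta^2$; (iv) apply the prover's preimage-test strategy to the (essentially undisturbed) ancilla to obtain some $x\in\{x_0,x_1\}$, and output the claw $(x,\,x\oplus a)$.

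The main technical obstacle, and the step I expect to require the most care, is arguing that steps (iii) and (iv) succeed \emph{jointly} on a single copy of $\ket{\psi_y}$ with probability at least $4\delta^2 - \kappa - \negl(\lambda)$. Since the GL procedure of Theorem~\ref{thm:qgl} is assembled from one forward and one inverse query of $U_{\mathrm{IP}}$ with a single Hadamard-basis measurement on the \emph{challenge} register only, the prover's ancilla is undisturbed apart from that measurement; a suitable quantum union-bound / gentle-measurement argument (equivalently, running the preimage strategy coherently and deferring all measurements until the end) then gives $\Pr[\text{claw extracted}] \geq 4\delta^2 + (1-\kappa) - 1 - \negl(\lambda) = 4\delta^2 - \kappa - \negl(\lambda)$. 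Claw-freeness of $\mathcal{F}$ against quantum polynomial-time adversaries forces this quantity to be negligible, so $4\delta^2 \leq \kappa + \negl(\lambda)$, yielding $\delta \leq \sqrt{\kappa}/2 + \negl(\lambda)$. Hence Assumption~\ref{ass:parity} holds with $s(\lambda,\tau) = \sqrt{\tau/2} + \negl(\lambda)$, and substituting $\tau=\kappa/2$ into Theorem~\ref{thm:main} delivers the claimed bound $q \leq \cos^2(\pi/8) + \sqrt{\kappa}/2 + \negl(\lambda)$.
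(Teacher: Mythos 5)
Your overall strategy mirrors the paper's: recast the protocol as an instance of the template (absorbing the test-choice coin flip and, in the equation branch, the $r\to d$ exchange into Phase A), observe that $\hat{c}_0\hat{c}_1=(-1)^{r\cdot(x_0\oplus x_1)}$, build a $U_{\mathrm{IP}}$ from a coherent execution of the prover together with $\mA_2$, run quantum Goldreich--Levin to extract $x_0\oplus x_1$, then run the preimage-test strategy on the residual state and combine the two outputs into a claw. This is exactly the construction the paper packages as $\mA'$ (with modes $m'\in\{0,1\}$) and $\mA''$.

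The step that does not hold up is the quantitative one. You assert a ``quantum union-bound / gentle-measurement argument'' yielding $\Pr[\text{claw}]\geq 4\delta^2+(1-\kappa)-1$. No such bound exists here: the Goldreich--Levin success projection $P$ (with $\|P\ket{\psi_y}\|^2\geq 4\delta^2$) and the preimage-extraction projection $Q$ (with $\|Q\ket{\psi_y}\|^2=1-\kappa$) act non-commutingly on the prover's register, so the classical inclusion-exclusion inequality $\Pr[A\cap B]\geq\Pr[A]+\Pr[B]-1$ does not transfer to $\|QP\ket{\psi_y}\|^2$, and ``deferring all measurements to the end'' does not help because the GL unitary and the intermediate Hadamard measurement disturb the very register the preimage extractor subsequently acts on. The Gao/Aaronson-style quantum union bound requires each measurement individually to accept with probability close to $1$, which $P$ does not (its acceptance probability is the small quantity $4\delta^2$). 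What the paper argues instead (Claim~\ref{claim:temp-1}) is a one-sided disturbance bound: since $P$ is a projection and $\|P\ket{\psi_y}\|\geq 2\delta$, the normalized post-GL state $P\ket{\psi_y}/\|P\ket{\psi_y}\|$ has overlap $\geq 2\delta$ with $\ket{\psi_y}$, hence trace distance $\leq\sqrt{1-4\delta^2}$, so the subsequent preimage extraction succeeds with probability at least $1-\kappa-\sqrt{1-4\delta^2}$; claw-freeness then forces $\delta=O(\sqrt{\kappa})+\negl(\lambda)$, and Theorem~\ref{thm:main} gives the claimed bound. (A triangle-inequality variant, $\|QP\ket{\psi_y}\|\geq\|Q\ket{\psi_y}\|-\|(\Id-P)\ket{\psi_y}\|$, yields $\Pr[\text{claw}]\geq\bigl(\sqrt{1-\kappa}-\sqrt{1-4\delta^2}\bigr)^2$ and hence $4\delta^2\leq\kappa+\negl(\lambda)$ directly, which does give your target $\delta\leq\tfrac{1}{2}\sqrt{\kappa}+\negl(\lambda)$; but the lower bound on the claw probability is quadratically, not linearly, small in $4\delta^2-\kappa$, so your specific intermediate claim is incorrect even though its consequence for $\delta$ is right.)
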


We remark that in principle using the same proof strategy as for the theorem, we could show that the KCVY protocol leads to a qubit test. We omit the details here and show this property for the simplified variant of the protocol introduced in the next section.

\begin{proof}
We first observe that the KCVY protocol fits the protocol template from Figure~\ref{fig:template} with the following adaptations. We incorporate all steps of the protocol except (b)ii. (b)iii. in Phase A. In particular, the choice of executing a preimage test or an equation test is made in Phase A. If the preimage test is chosen, then Phase A terminates with the result of that test, $\flag=\acc$ or $\flag=\rej$. If the equation test is chosen, then step (b)i. is executed in Phase A, $\flag=\cont$, and steps (b)ii. (b)iii. are executed in Phase B.

Now we need to show that Assumption~\ref{ass:parity} is satisfied. Suppose that $\mA_1$ succeeds with probability $1-\kappa$ in Phase A, and that $\mA_2$ has advantage $\delta$ in the parity guessing task. We first use $(\mA_1,\mA_2)$ to construct a quantum algorithm $\mA$' with the following properties. When given as input $(y,\ket{\psi_y})$ generated from the first round of the KCVY protocol (Figure~\ref{fig:kcvy}),
\begin{enumerate}
\item On input $m'=0$, $\mA'$ returns $x_0$ or $x_1$ with probability at least $1-\kappa$.
\item On input $m'=1$, $\mA'$ returns $x_0\oplus x_1$ with probability at least $4\delta^2$.
\end{enumerate}
To get the input for $\mA'$, we first execute $\mA_1$ for the first round of the protocol only. 
This yields a string $y$ and a post-measurement state $\ket{\psi_y}$. 
Now, if $m'=0$ then $\mA'$ executes the remaining actions of $\mA_1$ corresponding to the preimage test.
By supposition, the first item is satisfied.
If $m'=1$ then $\mA'$ proceeds as follows. 
$\mA'$ creates the state
\[\ket{\tilde{\psi}_y} \,=\, \frac{1}{\sqrt{2^n}} \sum_{r\in \{0,1\}^n} \ket{r} \otimes \ket{\psi_y}\;.\]
Let $U$ be the following unitary. $U$ first coherently executes the remainder of $\mA_1$ on this state, treating the first register as the verifier's question $r$ in step (b)i, and writes the outcome $d\in\{0,1\}^n$ in an ancilla register.
Then, $U$ coherently executes $\mA_2$ on all registers, writing the outcome $b$ in another ancilla register.
Observe that this unitary satisfies the two conditions of Definition~\ref{def:gl}, for $\eps = \delta$ and the string $a=x_0\oplus x_1$. This is because $\hat{c}_0\cdot \hat{c}_1 = r\cdot (x_0\oplus x_1)$, as can be verified from Table~\ref{probs1}.
Finally, $\mA'$ on input $m'=1$ executes the algorithm of Theorem~\ref{thm:qgl} (the quantum Goldreich-Levin theorem).
By Theorem~\ref{thm:qgl}, the second item above is satisfied.

To conclude we show the following.

\begin{nclaim}\label{claim:temp-1}
For any quantum polynomial time $\mA'$ satisfying the two items above, it holds that
\[ \delta \,\leq\, \frac{1}{2}\, \kappa^{1/2} + \negl(\lambda)\;\]
\end{nclaim}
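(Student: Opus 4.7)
The plan is to reduce the bound on $\delta$ directly to the claw-free property of the TCF family. Specifically, I would construct from $\mA'$ a polynomial-time quantum algorithm $\mathcal{B}$ that, on input a key $k$ alone (and without interacting with a verifier), outputs a candidate claw $(x_0, x_1)$ for $f_k$ with probability lower-bounded by a function of $\kappa$ and $\delta$. By Definition~\ref{def:tcf} any such probability must be $\negl(\lambda)$, and rearranging will yield the claimed bound on $\delta$.

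On input $k$, the algorithm $\mathcal{B}$ would proceed as follows. First, $\mathcal{B}$ samples internally the randomness used by $\mA_1$ and runs the first round of the protocol to obtain $y \in \mY_k$ together with the post-interaction state $\ket{\psi_y}$ on $\mH_\reg{PE}$. Second, $\mathcal{B}$ applies coherently the unitary implementing $\mA'(m'=0)$ to $\ket{\psi_y}$ and measures the answer register to obtain a candidate preimage $x$; by item 1 of the construction this satisfies $f_k(x) = y$ with probability at least $1-\kappa$. Third, $\mathcal{B}$ applies the inverse of that unitary to (approximately) restore the state $\ket{\psi_y}$, then runs $\mA'(m'=1)$ on the restored state and measures to obtain a candidate XOR $z$; by item 2 together with the gentle measurement lemma, $z = x_0 \oplus x_1$ with probability at least $4\delta^2 - O(\sqrt{\kappa})$. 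Finally, $\mathcal{B}$ outputs $(x,\, x \oplus z)$, which is a valid claw whenever both intermediate subroutines succeed. A union bound gives overall claw-finding probability at least $(1-\kappa)\bigl(4\delta^2 - O(\sqrt{\kappa})\bigr)$, and claw-freeness forces this to be $\negl(\lambda)$; rearranging (while tracking constants carefully) then gives $\delta \leq \tfrac{1}{2}\kappa^{1/2} + \negl(\lambda)$.

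The anticipated main obstacle is the rewinding step between the two subroutines. In a classical setting this would be trivial (simply re-run the algorithm on fresh coins), but for a quantum adversary we cannot freely rewind, since the intermediate measurement destroys coherence and $\ket{\psi_y}$ cannot be cloned. Here the obstacle is resolved by the gentle measurement lemma: because the first subroutine succeeds with probability at least $1-\kappa$, conditioning on a successful outcome perturbs the state by at most $O(\sqrt{\kappa})$ in trace distance, so the success probability of the second subroutine on the restored state is within $O(\sqrt{\kappa})$ of its value $4\delta^2$ on a fresh copy of $\ket{\psi_y}$. With that in hand, the remainder of the argument is arithmetic bookkeeping. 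A secondary technical point to verify is that the coherent Goldreich–Levin procedure from Theorem~\ref{thm:qgl} used to construct $\mA'(m'=1)$ indeed makes only the number of queries claimed and admits a unitary implementation suitable for being rewound by $\mA'(m'=0)^{\dagger}$.
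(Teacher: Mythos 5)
Your high-level plan matches the paper's: from $\mA'$ construct a claw-finder $\mA''$ that runs $\mA'$ on both inputs $m'\in\{0,1\}$ in sequence and controls the disturbance caused by the first measurement via a fidelity argument. The genuine gap is quantitative, and it comes from the order in which you make the two calls.

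You run $\mA'(m'=0)$ first (success probability $\geq 1-\kappa$) and $\mA'(m'=1)$ second (success probability $\geq 4\delta^2$). The post-measurement state after the first call is at trace distance at most $\sqrt{\kappa}$ from $\ket{\psi_y}$, so the second call succeeds with probability at least $4\delta^2-\sqrt{\kappa}$, and claw-freeness forces $4\delta^2-\sqrt{\kappa}=\negl(\lambda)$, i.e.\ $\delta = O(\kappa^{1/4})+\negl(\lambda)$. Your concluding sentence that ``rearranging \ldots\ gives $\delta\leq\tfrac12\kappa^{1/2}$'' is therefore an arithmetic slip: the exponent your own derivation yields is $1/4$, a strictly weaker bound than the claim.

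The paper runs the two calls in the opposite order, $\mA'(m'=1)$ first and $\mA'(m'=0)$ second. Conditioned on the low-probability outcome $x_0\oplus x_1$, the post-measurement state has fidelity at least $2\delta$ with $\ket{\psi_y}$, hence trace distance at most $\sqrt{1-4\delta^2}$. That quantity is close to $1$ when $\delta$ is small, but the second call succeeds with probability $1-\kappa$, which is also close to $1$, and the two near-unit quantities cancel to leading order: $(1-\kappa)-\sqrt{1-4\delta^2}\approx 2\delta^2-\kappa$. Positivity then only requires $\delta\gtrsim\sqrt{\kappa}$, which recovers the claimed dependence. The useful heuristic is that when a $\sqrt{1-p}$ disturbance tax must be paid on one of two sequential measurements, one should arrange to pay it against the high-probability event, i.e.\ run the low-probability subroutine first. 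With the order of your two calls reversed (and the gentle-measurement step replaced by the direct fidelity-to-trace-distance computation above), your argument coincides with the paper's.
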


\begin{proof}
We construct an algorithm $\mA''$ that returns both $x_0$ and $x_1$, thus violating the claw-free property of the TCF.
$\mA''$ is very simple: as above, it first executes $\mA_1$ for the first round of the protocol, yielding a string $y$ and state $\ket{\psi_y}$.
Then, it simply executes $\mA'$ on this state with input $m'=1$, and then immediately executes $\mA'$ again on the resulting state with input $m'=0$.
By item 2, with probability $4\delta^2$ the first execution of $\mA'$ obtains $x_0\oplus x_1$.
Let $P$ is the projection on this outcome being obtained (i.e.\ we model $\mA'$ as a projective measurement and $P$ is the projection associated with the outcome $x_0\oplus x_1$).
We have that $\bra{\psi_y} P \ket{\psi_y} \geq 4\delta^2$.
This then implies that $|\bra{\psi_y} P\ket{\psi_y}|^2 \geq 4 \delta^2 \|P\ket{\psi_y}\|^2$, or in other words, the trace distance between the original state, and the post-measurement state conditioned on the outcome being $x_0\oplus x_1$, is at most $\sqrt{1-4\delta^2}$. 
Therefore, when subsequently executing $\mA'$ on input $m'=0$ the outcome is $x_0$ or $x_1$ with probability at least $1-\kappa-\sqrt{1-4\delta^2}$. As long as this quantity is non-negligible, the claw-free property is violated. The claim follows.
\end{proof}

The theorem follows from Claim~\ref{claim:temp-1} by applying Theorem~\ref{thm:main}.
\end{proof}

\subsection{A simplified protocol}

We introduce a simplified variant of the KCVY protocol described in the previous section. 
This variant is described in Figure~\ref{fig:protocol}. 
Our variant introduces a small innovation that allows us to do away with the preimage test entirely. 
The idea is that, instead of sending a single string $r \in \{0,1\}^n$ with which the prover computes the inner products $r\cdot x_0$ and $r \cdot x_1$, the verifier sends separate strings $r_0, r_1 \in \{0,1\}^n$ and computes the inner products $r_0 \cdot x_0$ and $r_1 \cdot x_1$.\footnote{This does require a minor extra property of the TCF, which states that ``$x_0$'' type preimages can be efficiently distinguished from ``$x_1$'' type preimages; this property can be shown to hold for all TCF constructions of which we are aware.}
Then, predicting the parity $(r_0 \cdot x_0)\oplus (r_1 \cdot x_1)$ is equivalent to computing the value of $r' \cdot (x_0 || x_1)$ for a random string $r' = r_0 || r_1$, which by the quantum Goldreich-Levin theorem is as hard as predicting the string $x_0 || x_1$ and thus also the claw $(x_0,x_1)$. 

\begin{figure}[!htbp]
  \centering
  \begin{gamespec}
Fix a TCF family $(\Gen,\Inv)$ and a security parameter $\lambda$.
    \begin{enumerate}
      \setlength\itemsep{1pt}
    \item  The verifier samples $(k,t_{k})\leftarrow \Gen(1^{\lambda})$ and sends $k$ to the prover. The prover returns a string $y\in\mY_k$ to the verifier.
    \item The verifier chooses $r_{0},r_{1} \leftarrow \{0,1\}^n$ uniformly at random and sends them to the prover. The prover returns a string $d\in \{0,1\}^{n}$.
	\item The verifier sends a challenge $m\in \{0,1\}$ chosen uniformly at random to the prover. The prover responds with a bit  $b\in \{0,1\}$.
    \item The verifier computes $(x_0,x_1)\leftarrow\Inv(t_k,y)$. They accept if and only if $(-1)^b=\hat{c}_m$, where $\hat{c}_m$ is defined in~\eqref{def:hat-c}.
    \end{enumerate}
  \end{gamespec}
  \caption{A simpler variant of the KCVY protocol.}
  \label{fig:protocol}
\end{figure}

\subsubsection{Honest prover}

Because the protocol in Figure~\ref{fig:protocol} is new, we start by arguing quantum completeness: we describe the actions of a honest quantum prover in the protocol. 

\begin{proposition}
There exists a polynomial-time quantum prover who succeeds in the simplified KCVY protocol from Figure~\ref{fig:protocol} with probability $\cos^2(\pi/8)$.
\end{proposition}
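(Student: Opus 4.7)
The plan is to exhibit an explicit polynomial-time honest quantum prover and verify the claimed success probability by reducing the Phase~B analysis to the optimal ``Bob'' strategy in the CHSH nonlocal game, whose winning probability is $\cos^2(\pi/8)$.

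In Phase~A, I would use Property~4 of Definition~\ref{def:tcf} to have the prover prepare $\tfrac{1}{\sqrt{|\mX_k|}}\sum_x\ket{x}\ket{f_k(x)}$ and measure the image register to obtain $y$, leaving register $A$ in the claw state $\tfrac{1}{\sqrt 2}(\ket{x_0}+\ket{x_1})$. Using the efficient distinguishability of $x_0$- versus $x_1$-type preimages (the extra TCF property flagged in the footnote of Figure~\ref{fig:protocol}), the prover then attaches a single-qubit ``type'' register $B$ initialized to $\ket{0}$ and coherently writes the type of $A$ into $B$, producing the entangled state $\tfrac{1}{\sqrt 2}(\ket{x_0}_A\ket{0}_B+\ket{x_1}_A\ket{1}_B)$. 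On receiving $(r_0,r_1)$, the prover applies the $B$-controlled phase $\ket{x}_A\ket{b}_B\mapsto(-1)^{r_b\cdot x}\ket{x}_A\ket{b}_B$, then applies $H^{\otimes n}$ to $A$ and measures it to produce $d\in\{0,1\}^n$, which is returned to the verifier.

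Next I would compute the residual state of $B$ by a direct expansion of amplitudes: up to a global phase it is $\tfrac{1}{\sqrt 2}\bigl(\ket{0}+(-1)^\mu\ket{1}\bigr)$, where $\mu=(r_0\cdot x_0)\oplus(r_1\cdot x_1)\oplus(d\cdot(x_0\oplus x_1))$. Equivalently, $B$ is in the state $\ket{+}$ when $\mu=0$ and $\ket{-}$ when $\mu=1$---precisely the qubit Bob of CHSH holds after Alice measures her half of a maximally entangled pair in the Hadamard basis and announces the bit $\mu$. In Phase~B, on challenge $m\in\{0,1\}$, the prover then measures $B$ along the Bloch axis $\cos(\pi/4)\sigma_Z+(-1)^m\sin(\pi/4)\sigma_X$ and returns the outcome $b$. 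The convention for $\hat c_m$ in~\eqref{def:hat-c} is arranged so that $\hat c_0\hat c_1=(-1)^\mu$ and so that $(-1)^b=\hat c_m$ is the CHSH-winning condition on this branch; a direct Bloch-sphere overlap calculation then yields $\Pr((-1)^b=\hat c_m)=\cos^2(\pi/8)$ for each $m$, which averages trivially to the same value over uniform $m$.

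The one step that will need careful checking is the alignment between the explicit convention for $\hat c_m$ in~\eqref{def:hat-c} and the Bloch-sphere overlaps of the two rotated measurements against $\ket{\pm}$; this reduces to evaluating four explicit inner products and is routine, but is the place where the protocol's specific encoding conventions have to be matched against the CHSH correspondence.
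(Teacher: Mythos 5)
Your construction has a real flaw, and it is not merely the bookkeeping you flagged at the end. After the controlled-phase step, your joint state is $\tfrac{1}{\sqrt 2}\big((-1)^{r_0\cdot x_0}\ket{x_0}_A\ket{0}_B+(-1)^{r_1\cdot x_1}\ket{x_1}_A\ket{1}_B\big)$. When $r_0\cdot x_0=r_1\cdot x_1$ (which happens for half of the $r_0,r_1$), the common factor $(-1)^{r_0\cdot x_0}$ is a \emph{global} phase. After the Hadamard on $A$ and the measurement of $d$, register $B$ is left in $\tfrac{1}{\sqrt 2}(\ket 0+(-1)^{\mu}\ket 1)$ with $\mu=(r_0\cdot x_0)\oplus(r_1\cdot x_1)\oplus(d\cdot(x_0\oplus x_1))$, and in this branch $\mu$ reduces to $d\cdot(x_0\oplus x_1)$; the value $r_0\cdot x_0$ is no longer present anywhere in the residual state. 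Yet the protocol's $\hat c_m$ in~\eqref{def:hat-c} equals $(-1)^{r_0\cdot x_0}$ on exactly this branch, independent of $m$. No measurement on $B$ can beat a coin flip there, so the total success probability of your strategy is at most $\tfrac12\cdot\tfrac12+\tfrac12\cdot 1=\tfrac34$, strictly below $\cos^2(\pi/8)$. Relatedly, your identity $\hat c_0\hat c_1=(-1)^{\mu}$ is incorrect: from~\eqref{def:hat-c} one computes $\hat c_0\hat c_1=(1-\alpha)^2-\alpha^2=(-1)^{\alpha}$ with $\alpha=(r_0\cdot x_0)\oplus(r_1\cdot x_1)$, not $(-1)^{\mu}$.

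The paper's honest prover avoids this by \emph{not} collapsing to a fixed $\ket\pm$ state. It writes $r_b\cdot x_b$ into a value register (and uncomputes the type ancilla rather than keeping it), so that after measuring $d$ the remaining qubit is $\ket{r_0\cdot x_0}$, $\ket{r_1\cdot x_1}$, $\ket+$, or $\ket-$ — exactly the four cases of Table~\ref{probs1} — and in the $\alpha=0$ branch it is a computational-basis state that still records $r_0\cdot x_0$. Measuring that qubit at angles $\pm\pi/8$ then succeeds with probability $\cos^2(\pi/8)$ against $\hat c_m$ in every case. If you want to salvage the CHSH intuition, the correct analogy is that ``Alice'' measures in either the $Z$ basis (branch $\alpha=0$) or the $X$ basis (branch $\alpha=1$), not always in the $X$ basis as your construction forces.
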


\begin{proof}
Fix a TCF family $(\Gen,\Inv)$, a security parameter $\lambda$ and a key $(k,t_k)\leftarrow \Gen(1^\lambda)$ as generated by the verifier in the protocol. Furthermore, assume that $\mX_k \subseteq \{0,1\}^n$ for some $n=n(\lambda)$.

The prover proceeds as in the original protocol for the first round, yielding a string $y\in\mY_k$ and the post-measurement state
\[  \frac{1}{\sqrt{2}}\big( \ket{x_{0}}+\ket{x_{1}}\big)\;.\]
Upon receiving $r_0, r_1 \in\{0,1\}^{n}$, the prover computes an ancilla qubit which differentiates $x_0$ and $x_1$, yielding
\[  \frac{1}{\sqrt{2}}\big( \ket{0}\ket{x_{0}}+\ket{1}\ket{x_{1}}\big)\;.\]
With the ancilla, the prover can use controlled operations to compute the state
\[  \frac{1}{\sqrt{2}}\big( \ket{0}\ket{x_{0}}\ket{r_0 \cdot x_{0}}+\ket{1}\ket{x_{1}} \ket{r_1 \cdot x_1}\big)\;.\]
The prover then uncomputes the ancilla.
After a Hadamard transformation on the register containing $x_0$ and $x_1$, the state becomes
\[ \frac{1}{\sqrt{2^{2n+1}}}\sum_{d \in \{0,1\}^{n}} \ket{d} \left( (-1)^{d\cdot x_{0}} \ket{r_{0}\cdot x_{0}}+(-1)^{d \cdot x_{1}} \ket{r_{1}\cdot x_{1}}\right)\;.\]
 Measuring the first register to obtain a particular string $d$, the post-measurement state is
\[ \frac{(-1)^{d\cdot x_0}}{\sqrt{2}}\Big( \ket{r_{0}\cdot x_{0}}+(-1)^{d\cdot (x_{0} \oplus x_{1})}\ket{r_{1}\cdot x_{1}}\Big)\;.\]
The prover now returns the string $d$ to the verifier. They receive a challenge $m\in\{0,1\}$.
Let
\[\alpha = r_{0}\cdot x_{0}\oplus r_{1}\cdot x_{1} = r' \cdot (x_{0}||x_{1})\;,\qquad \beta = d \cdot \left( x_{0} \oplus x_{1} \right) \;,\]
and
\begin{align}\label{def:hat-c}
   \hat{c}_m \,=\, \hat{c}_m(r,x_{0},x_{1},d)\,=\,(1-\alpha)\cdot (-1)^{r_{0}\cdot x_{0}} + \alpha (-1)^{\beta} \cdot (-1)^{m}\;.
\end{align}
Finally, the prover measures the remaining qubit in the basis $\{\ket{\pi/8},\ket{5\pi/8}\}$ if $m=0$, and in the basis $\{\ket{-\pi/8},\ket{3\pi/8}\}$ in case $m=1$, where $\ket{\theta} = \cos(\theta)\ket{0} + \sin(\theta)\ket{1}$. Let $b\in \{0,1\}$ be the outcome obtained. It is straightforward to see that in all cases they return the correct answer $(-1)^b=\hat{c}_m$ with probability $\cos^2(\pi/8)$.
\end{proof}

\subsubsection{Soundness}

We now argue both classical and quantum soundness of the protocol, by applying Theorem~\ref{thm:main}. We obtain the following.

\begin{theorem}[Classical and quantum soundness of the simplified protocol]\label{thm:simplified}
The maximum probability with which a classical (resp. quantum) polynomial-time prover may succeed in the $3$-round protocol from Figure~\ref{fig:protocol} is $\frac{3}{4}+\negl(\lambda)$ (resp. $\cos^2 \frac{\pi}{8} + \negl(\lambda)$).
\end{theorem}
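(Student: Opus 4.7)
The plan is to cast the simplified protocol of Figure~\ref{fig:protocol} as an instance of the template from Figure~\ref{fig:template} and then invoke Theorem~\ref{thm:main}. Phase A is identified with rounds~1 and~2 (the verifier sending $k$ and receiving $y$, then sending $(r_0,r_1)$ and receiving $d$); because the verifier performs no test during these rounds, the flag is always $\cont$, so one takes $\kappa=0$. Phase B is round~3, with the verifier's decision bit computed via~\eqref{def:hat-c}. Once Assumption~\ref{ass:parity} is established with $s(\lambda,0)=\negl(\lambda)$, Theorem~\ref{thm:main} immediately yields the two claimed bounds $\tfrac34+\negl(\lambda)$ and $\cos^2\tfrac{\pi}{8}+\negl(\lambda)$.

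The key preliminary calculation is a clean formula for the parity. From~\eqref{def:hat-c}, if $\alpha=r_0\cdot x_0\oplus r_1\cdot x_1=0$ then $\hat{c}_0=\hat{c}_1=(-1)^{r_0\cdot x_0}$, so $\hat{c}_0\cdot\hat{c}_1=+1$; if $\alpha=1$ then $\hat{c}_m=(-1)^{\beta+m}$, so $\hat{c}_0\cdot\hat{c}_1=-1$. Hence
\begin{equation*}
\hat{c}_0\cdot\hat{c}_1 \;=\; (-1)^{r'\cdot (x_0\| x_1)},\qquad r':=r_0\| r_1\in\{0,1\}^{2n},
\end{equation*}
where $r'$ is uniform conditional on the Phase A transcript $(k,y,d)$. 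This is the essential payoff of the new two-string design: the parity is now a Goldreich-Levin inner product against the \emph{full} claw $x_0\| x_1$, not merely against the XOR $x_0\oplus x_1$ as in~\cite{KCVY2022}, so no auxiliary preimage test is needed.

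Given a quantum polynomial-time parity adversary $(\mA_1,\mA_2)$ of advantage $\delta$, I would build a quantum inner product query in the sense of Definition~\ref{def:gl} for the string $a=x_0\| x_1$ with bias $\delta$. First, run the first round of $\mA_1$ to obtain $(y,\ket{\psi_y})$. Then, in a fresh $2n$-qubit register holding a superposition input $r'$, coherently execute the remainder of $\mA_1$ (treating $r'$ as the verifier's second-round message and writing $d$ into an ancilla), and then coherently execute $\mA_2$ to write its guess bit into the output qubit. By construction the query register is untouched, satisfying condition~2 of Definition~\ref{def:gl}, while condition~1 holds with bias $\delta$ by the above parity formula. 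Applying Theorem~\ref{thm:qgl} recovers $x_0\| x_1$, and hence the claw $(x_0,x_1)$, with probability at least $4\delta^2$; claw-freeness then forces $\delta=\negl(\lambda)$. The classical case is identical with the classical Goldreich-Levin theorem in place of Theorem~\ref{thm:qgl}.

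The one subtlety---and the step I expect to require the most care---is the coherent simulation of $\mA_1$'s second-round response and of $\mA_2$. Each of these may internally involve intermediate measurements on the prover's workspace, and one must purify them (by the standard deferred-measurement principle) so that the entire procedure from receiving $r'$ to outputting $b$ is a single unitary on the query register, the prover's workspace, and some ancillas, and, crucially, leaves the query register untouched so condition~2 of Definition~\ref{def:gl} is met. This is routine but is the one place one must be careful; once done, the two soundness statements follow by simply invoking Theorem~\ref{thm:main} with $\kappa=0$ and $s(\lambda,0)=\negl(\lambda)$.
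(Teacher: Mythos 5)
Your proposal is correct and follows essentially the same route as the paper: cast the protocol in the template with $\kappa=0$, observe that $\hat{c}_0\cdot\hat{c}_1=(-1)^{r'\cdot(x_0\|x_1)}$, build a quantum inner-product query from the parity adversary and apply quantum Goldreich--Levin (Theorem~\ref{thm:qgl}) to recover a claw with probability $4\delta^2$, then invoke claw-freeness and Theorem~\ref{thm:main}. The only cosmetic difference is that you explicitly flag the deferred-measurement purification and separate out the classical case via classical Goldreich--Levin, whereas the paper handles both via the single quantum construction.
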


\begin{proof}
We emphasize that the proof is particularly simple.
To show the theorem, it suffices to (a) show that the protocol can be formatted as an instance of our protocol template, and (b) show that Assumption~\ref{ass:parity} holds for some function $s(\lambda,\kappa)$.

Step (a) is very direct: we simply combine the first two rounds of the protocol into Phase A, and the last round into Phase B. Step (b) is a bit more interesting, yet still straightforward. The main observation is that for a given $y,d$ and $r$, the product $\hat{c}_0\cdot\hat{c}_1= (-1)^{r\cdot(x_0\|x_1)}$. Therefore, an adversary able to predict the product can --- via the quantum Goldreich Levin algorithm --- recover a claw $(x_0,x_1)$.

In more detail, let $\mA=(\mA_1,\mA_2)$ be a parity adversary. Since there is no test in Phase A of the protocol, we can assume that $\kappa=0$. Suppose that $\mA_2$ has advantage $\delta$ in the parity guessing task. We use $(\mA_1,\mA_2)$ to construct a quantum algorithm $\mA$' that returns a claw $(x_0,x_1)$ with probability at least $4\delta^2$. The construction of $\mA'$ is similar to the case $m'=1$ in the proof of Theorem~\ref{thm:kcvy}. We first execute $\mA_1$ for the first round of the protocol, yeilding $(y,\ket{\psi_y})$. We then create
\[ \ket{\tilde{\psi}_y} \,=\, \frac{1}{\sqrt{2^{2n}}} \sum_{r\in\{0,1\}^{2n}} \ket{r} \ket{\psi_y}\;.\]
We define the following unitary $U$ on this state. $U$ coherently executes $\mA_1$ for the second round of the protocol, treating the first register as the verifier's question, and writes the outcome $d\in\{0,1\}^{n}$ in an ancilla register. 
Then, $U$ coherently executes $\mA_2$ on all registers, writing the outcome $b$ in another ancilla register. 
This unitary satisfies the two conditions of Definition~\ref{def:gl}, for $\eps = \delta$ and the string $a=x_0\| x_1$. 
This is because $\hat{c}_0\cdot \hat{c}_1 = (-1)^{r\cdot (x_0\| x_1)}$, as can be verified from~\eqref{def:hat-c}. 
We now define $\mA'$ to execute the algorithm of Theorem~\ref{thm:qgl}. By Theorem~\ref{thm:qgl}, $\mA'$ returns a claw $(x_0,x_1)$ with probability $4\delta^2$. 
By the claw-free property, it follows that $\delta = \negl(\lambda)$. This proves the theorem.
\end{proof}

Similarly, we apply Theorem~\ref{thm:qubit} to obtain the following consequence.

\begin{corollary}[Qubit test from the simplified KCVY protocol]
Suppose that a quantum prover succeeds with probability $\omega- \eps$ in the protocol from Figure~\ref{fig:protocol}. Let $\{\Pi^m_b\}$ be the projective measurement applied by the prover in the third round of the protocol and $S_m = \Pi^m_0-\Pi^m_1$. Then, on average over the transcript $\trans$ obtained in the first two rounds of the protocol it holds that
\[ \bra{{\psi_\trans}} \{ S_0,S_1\}^2 \ket{{\psi_{\trans}}} \,=\, O(\eps)\;,\]
where $\{S_0,S_1\}=S_0S_1+S_1S_0$ and $\ket{\psi_{\trans}}$ is the state of the prover at the end of the second round.
\end{corollary}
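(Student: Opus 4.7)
The plan is to obtain this corollary as an almost immediate consequence of Theorem~\ref{thm:qubit}, once the protocol of Figure~\ref{fig:protocol} has been cast as an instance of the template protocol. First, I would identify Phase A with the first two rounds (key generation, the prover's return of $y$, and the subsequent $(r_0,r_1) \mapsto d$ exchange), and Phase B with the last round (challenge $m$, response $b$, acceptance predicate $(-1)^b = \hat{c}_m$ with $\hat{c}_m$ as in~\eqref{def:hat-c}). Since there is no explicit test in Phase A, the verifier always sets $\flag = \cont$, so we may take $\kappa = 0$ in the notation of Theorem~\ref{thm:qubit}.

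Next, I would invoke the key technical input already established in the proof of Theorem~\ref{thm:simplified}: Assumption~\ref{ass:parity} holds for the simplified protocol with $s(\lambda,\kappa) = \negl(\lambda)$. Concretely, any parity adversary $\mA = (\mA_1,\mA_2)$ with advantage $\delta$ can, via coherent execution with a uniform superposition over the random string $r = r_0 \| r_1$ and the quantum Goldreich--Levin algorithm (Theorem~\ref{thm:qgl}), be converted into a polynomial-time quantum algorithm that extracts a claw $(x_0,x_1)$ with probability at least $4\delta^2$. This uses the crucial identity $\hat{c}_0 \cdot \hat{c}_1 = (-1)^{r \cdot (x_0 \| x_1)}$ readable from~\eqref{def:hat-c}, which is precisely what makes the hidden string for Goldreich--Levin equal to the claw. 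The claw-free property of the TCF family then forces $\delta = \negl(\lambda)$.

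With both ingredients in hand, I would apply Theorem~\ref{thm:qubit} with $\kappa = 0$ and $s(\lambda,\kappa) = \negl(\lambda)$, so that the error bound becomes
\[
 O\bigl(\eps + \sqrt{s}\bigr) = O\bigl(\eps + \negl(\lambda)\bigr) = O(\eps),
\]
yielding exactly $\bra{\psi_\trans} \{S_0,S_1\}^2 \ket{\psi_\trans} = O(\eps)$ on average over $\trans$. The measurements $\{\Pi^m_b\}$ in the statement of the corollary coincide with those of Definition~\ref{def:device}, and the observables $S_m = \Pi^m_0 - \Pi^m_1$ agree with the definition~\eqref{eq:def-s} used in Theorem~\ref{thm:qubit} via $Q_m = \Pi^m_{\hat{c}_m}$ after absorbing the sign $\hat{c}_m$ into the identification.

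The only place that requires care is making sure the reduction in the second paragraph survives the switch from the KCVY protocol (with its preimage test, used to witness possession of a single preimage alongside the parity guess) to the simplified protocol (where no preimage test is needed). This is precisely the innovation of the simplified protocol: because the verifier uses two independent strings $r_0,r_1$ so that the parity bit itself is $r \cdot (x_0 \| x_1)$, Goldreich--Levin directly yields the full claw rather than just $x_0 \oplus x_1$, removing any need for a separate witness for $x_0$ or $x_1$. I expect this to be the only subtle step, and since it has already been verified in Theorem~\ref{thm:simplified}, the remainder of the argument is a direct invocation of Theorem~\ref{thm:qubit}.
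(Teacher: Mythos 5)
Your proof is correct and follows essentially the same route as the paper: cast the simplified protocol into the template (Phase A = rounds 1--2, Phase B = round 3, $\kappa = 0$), import $s = \negl(\lambda)$ from the proof of Theorem~\ref{thm:simplified}, and invoke Theorem~\ref{thm:qubit} to get $O(\eps + \sqrt{s}) = O(\eps)$. The one detail the paper makes more explicit is the sign mismatch between the corollary's $S_m = \Pi^m_0 - \Pi^m_1$ and the $S_m = 2\Pi^m_{\hat{c}_m} - \Id$ of~\eqref{eq:def-s}, which differ by the transcript-dependent factor $\hat{c}_m \in \{\pm 1\}$; your phrase ``absorbing the sign $\hat{c}_m$ into the identification'' gestures at the right fix, but it would be cleaner to state the actual reason it is harmless, namely that $\{S_0,S_1\}^2$ is invariant under $S_m \mapsto -S_m$.
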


\begin{proof}
We apply Theorem~\ref{thm:qubit}. We already verified that Assumption~\ref{ass:parity} holds, for $s = \negl(\lambda)$, in the proof of Theorem~\ref{thm:simplified}. 
Theorem~\ref{thm:qubit} gives a bound on $\bra{{\psi_\trans}}\{S_0,S_1\}^2\ket{{\psi_\trans}}$. Here the observables $S_m$, $m\in \{0,1\}$, are defined from $\{\Pi^m_b\}$ using the definition of $\hat{c}_m$, see~\eqref{eq:def-s}. However, Since the squared anti-commutator $\{S_0,S_1\}^2$ is invariant under exchanges $S_0\leftarrow - S_0$ or $S_1 \leftarrow -S_1$. Therefore, the bound from Theorem~\ref{thm:qubit} also applies for the simpler definition of $S_m = \Pi^m_0-\Pi^m_1$.
\end{proof}

\subsection{The KLVY protocol}

In~\cite{kalai2022quantum} the authors introduce a general ``compiler'' that takes any $2$-prover nonlocal game and transforms it into a $2$-round test of quantumness.\footnote{Their results apply to $k$-prover nonlocal games; here we only consider the case where $k=2$.} They prove that the resulting protocol has classical soundness equal to the classical value of the nonlocal game, up to an additive term that is negligible in the security parameter $\lambda$, assuming the security of a quantum fully homomorphic encryption scheme with specific properties---namely, that it allows classical encryption of classical messages and that it satisfies a natural ``aux-input correctness'' property which they define. (Both properties are satisfied by the scheme from~\cite{mahadev2020classical}.)

Here we apply our general results to recover classical soundness of the KLVY protocol, when applied to the celebrated nonlocal game CHSH, which is based on the Bell inequality by Clauser et al.~\cite{clauser1969proposed}. Furthermore, we show quantum soundness of the same protocol (the authors were only able to establish quantum completeness; our bound matches theirs) and that the protocol can be used as a test for a qubit.

The KLVY protocol for the CHSH game is described in Figure~\ref{fig:klvy}. For the definition of a quantum fully homomorphic encryption scheme, and the specific properties required here, we refer to~\cite[Definition 2.3]{kalai2022quantum}.

\begin{figure}[!htbp]
  \centering
  \begin{gamespec}
Fix a quantum homomorphic encryption scheme $(\Gen,\Enc,\Eval,\Dec)$ and a security parameter $\lambda$.
    \begin{enumerate}
      \setlength\itemsep{1pt}
    \item  The verifier samples $sk\leftarrow \Gen(1^{\lambda})$. They sample an $x\in\{0,1\}$ uniformly at random and set $\hat{x}\leftarrow\Enc(sk,x)$. They send $\hat{x}$ to the prover. The prover responds with a ciphertext $\hat{a}$.
    \item The verifier sends $m\in\{0,1\}$ chosen uniformly at random to the prover. The prover responds with a bit  $b\in \{0,1\}$.
    \item The verifier computes $a\leftarrow \Dec(sk,\hat{a})$. They accept if and only if $(-1)^b=\hat{c}_m$, where $\hat{c}_m = (-1)^a (-1)^{xm}$.
    \end{enumerate}
  \end{gamespec}
  \caption{The KLVY protocol, specialized to the CHSH game.}
  \label{fig:klvy}
\end{figure}

\begin{theorem}[Classical and quantum soundness of the KLVY protocol for the CHSH game]\label{thm:klvy}
The maximum probability with which a classical (resp. quantum) polynomial-time prover may succeed in the $2$-round protocol from Figure~\ref{fig:protocol} is $\frac{3}{4}+\negl(\lambda)$ (resp. $\cos^2 \frac{\pi}{8} + \negl(\lambda)$).
\end{theorem}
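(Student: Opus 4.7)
The plan is to follow the same two-step strategy used for the simplified KCVY protocol in the proof of Theorem~\ref{thm:simplified}: first format the KLVY protocol as an instance of our template, and then establish Assumption~\ref{ass:parity} for a function $s(\lambda,\kappa)=\negl(\lambda)$. The formatting is immediate: Phase A consists of the verifier sampling $sk$ and $x$, sending $\hat{x}=\Enc(sk,x)$, and receiving a ciphertext $\hat{a}$; since there is no test during this phase we may set $\kappa=0$ and $\flag=\cont$ always. Phase B is simply the exchange of the challenge bit $m$ and the response $b$, with acceptance condition $(-1)^b=\hat{c}_m$. With this identification the template is in force and Theorem~\ref{thm:main} (together with Theorem~\ref{thm:qubit}, if one wants the qubit test) reduces the problem to bounding the advantage of a parity adversary.

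The second step is where the cryptographic content enters, but it is essentially free from the specification $\hat{c}_m=(-1)^a(-1)^{xm}$. Indeed, $\hat{c}_0\cdot\hat{c}_1=(-1)^{a}\cdot (-1)^{a+x}=(-1)^{x}$, so the parity to be predicted depends only on the plaintext bit $x$ sampled by the verifier, and not on the prover's answer $a$. Consequently a parity adversary $\mA=(\mA_1,\mA_2)$ with advantage $\delta$ yields an algorithm $\mA'$ which, on input an encryption $\hat{x}$ of a uniformly random bit $x\in\{0,1\}$, runs $\mA_1$ to obtain $(\trans,\ket{\psi_\trans})$ (using $\hat{x}$ as the verifier's first message) and then runs $\mA_2$ on $(\trans,\ket{\psi_\trans})$ to output a bit $b$ equal to $x$ with probability $\tfrac{1}{2}+\delta$. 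By the IND-CPA security of the quantum fully homomorphic encryption scheme (which is a consequence of the properties required in~\cite{kalai2022quantum}, and in particular holds for the scheme of~\cite{mahadev2020classical}), $\delta$ must be negligible in $\lambda$.

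Having established Assumption~\ref{ass:parity} with $s(\lambda,\kappa)=\negl(\lambda)$, the classical bound $\tfrac{3}{4}+\negl(\lambda)$ and the quantum bound $\cos^2\frac{\pi}{8}+\negl(\lambda)$ follow directly from Theorem~\ref{thm:main}.

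The main (minor) obstacle is purely notational rather than conceptual: one has to check that the reduction from a parity adversary to an IND-CPA distinguisher is polynomial-time and treats the post-interaction quantum state $\ket{\psi_\trans}$ properly, which is immediate since the whole algorithm $\mA$ is by assumption polynomial-time quantum and the reduction only runs $\mA_1$ and $\mA_2$ in sequence. Unlike the KCVY case no Goldreich--Levin step is needed, because here $\hat{c}_0\cdot\hat{c}_1$ is already a single plaintext bit rather than an inner product, so the parity prediction task coincides with the semantic security game itself.
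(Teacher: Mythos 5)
Your proposal matches the paper's proof almost verbatim: both identify Phase A and Phase B in the obvious way, observe that $\hat{c}_0\cdot\hat{c}_1 = (-1)^x$ so that a parity adversary is directly a distinguisher for the encryption of $x$, invoke semantic security to conclude $\delta=\negl(\lambda)$, and then apply Theorem~\ref{thm:main}. Your added remark that no Goldreich--Levin step is needed here (in contrast to the KCVY case) is a correct and helpful observation, but the overall argument is the same as the paper's.
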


\begin{proof}
Similarly to the proof of Theorem~\ref{thm:simplified}, it suffices to (a) show that the protocol can be reformatted as an instance of our protocol template, and (b) show that Assumption~\ref{ass:parity} holds for some function $s(\lambda,\kappa)$.

Step (a) is very direct: the first round of the protocol is Phase A, and the second round is Phase B. Step (b) is also straightforward. The main observation is that $\hat{c}_0\cdot \hat{c}_1=(-1)^x$. However, $x$ was only given in encrypted form to the prover. So, by semantic security of the homomorphic encryption scheme it should not be able to predict it with any non-negligible advantage.

We proceed with the details. Let $\mA=(\mA_1,\mA_2)$ be a parity adversary. Since there is no test in Phase A of the protocol, we can assume that $\kappa=0$. Suppose that $\mA_2$ has advantage $\delta$ in the parity guessing task. By definition $(\mA_1,\mA_2)$ can be combined into a quantum polynomial-time algorithm that  returns a guess for $x$ that is correct with probability $\frac{1}{2}+\delta$. It follows that $\delta=\negl(\lambda)$, concluding the proof.
\end{proof}

Similarly, we apply Theorem~\ref{thm:qubit} to obtain the following consequence.

\begin{corollary}[Qubit test from the KLVY protocol for the CHSH game]
Suppose that a quantum prover succeeds with probability $\omega- \eps$ in the protocol from Figure~\ref{fig:klvy}. Let $\{\Pi^m_b\}$ be the projective measurement applied by the prover in the second round of the protocol and $S_m = \Pi^m_0-\Pi^m_1$. Then, on average over the transcript $\trans$ obtained in the first round of the protocol it holds that
\[ \bra{{\psi_\trans}} \{ S_0,S_1\}^2 \ket{{\psi_{\trans}}} \,=\, O(\eps)\;,\]
where $\{S_0,S_1\}=S_0S_1+S_1S_0$ and $\ket{\psi_{\trans}}$ is the state of the prover at the end of the first round.
\end{corollary}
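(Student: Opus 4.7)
The plan is to mirror the proof of the analogous corollary for the simplified KCVY protocol: apply Theorem~\ref{thm:qubit} directly and then handle the mild discrepancy between the observables defined there (via $\hat{c}_m$, as in~\eqref{eq:def-s}) and the observables $S_m = \Pi^m_0 - \Pi^m_1$ appearing in the corollary's statement.

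First I would verify that the hypotheses of Theorem~\ref{thm:qubit} are in force. The proof of Theorem~\ref{thm:klvy} already showed that the KLVY protocol for CHSH fits our template with Phase A being the first round and Phase B the second, and that Assumption~\ref{ass:parity} holds with $s(\lambda,\kappa) = \negl(\lambda)$ and $\kappa = 0$ (since no test is performed in Phase A, we can take $\kappa = 0$). Applying Theorem~\ref{thm:qubit} with these parameters and the assumed success probability $\omega - \eps$ gives
\begin{equation*}
\mathbb{E}_{(\rand,\trans)}\,\bra{\psi_\trans}\{S_0^{\hat{c}},S_1^{\hat{c}}\}^2\ket{\psi_\trans} = O\bigl(\eps + \sqrt{\negl(\lambda)}\bigr) = O(\eps),
\end{equation*}
where I am writing $S_m^{\hat{c}} = 2\Pi^m_{\hat{c}_m} - \Id$ to distinguish the observable from~\eqref{eq:def-s} from the $S_m$ in the corollary.

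The second step is to eliminate the dependence on $\hat{c}_m$. By definition $S_m^{\hat{c}} = \hat{c}_m\,(\Pi^m_0 - \Pi^m_1) \cdot (\pm 1\text{ sign conventions}) = \hat{c}_m S_m$, so the two observables differ only by the sign $\hat{c}_m \in \{-1,+1\}$. The squared anti-commutator is invariant under independent sign flips of each operand, namely
\begin{equation*}
\{\hat{c}_0 S_0,\hat{c}_1 S_1\}^2 = \hat{c}_0^2 \hat{c}_1^2 \,\{S_0,S_1\}^2 = \{S_0,S_1\}^2,
\end{equation*}
and so the bound transfers directly to the observables $S_m = \Pi^m_0 - \Pi^m_1$ of the statement. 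This gives the claimed bound $\bra{\psi_\trans}\{S_0,S_1\}^2\ket{\psi_\trans} = O(\eps)$ on average over $\trans$.

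I do not expect any genuine obstacle here: all of the technical work has already been absorbed into Theorem~\ref{thm:qubit} and into the verification of Assumption~\ref{ass:parity} carried out in the proof of Theorem~\ref{thm:klvy}. The only small subtlety to keep in mind is precisely the sign-convention point in the second step, which is why it is natural (and harmless) to restate the corollary using the simpler observables $S_m = \Pi^m_0 - \Pi^m_1$ rather than the $\hat{c}_m$-dependent ones used inside the generic analysis.
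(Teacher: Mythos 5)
Your proof is correct and follows essentially the same route as the paper: invoke Theorem~\ref{thm:qubit} using the verification of Assumption~\ref{ass:parity} from the proof of Theorem~\ref{thm:klvy}, then transfer the bound from the $\hat{c}_m$-dependent observables of~\eqref{eq:def-s} to $S_m = \Pi^m_0 - \Pi^m_1$ via sign-flip invariance of $\{S_0,S_1\}^2$. If anything, your handling of the sign convention is slightly cleaner than the paper's, which asserts that $S_0$ from~\eqref{eq:def-s} coincides with $\Pi^0_0-\Pi^0_1$ (true only when $a=0$) and invokes the invariance under $S_1 \leftarrow -S_1$ only; your observation that the squared anti-commutator is invariant under \emph{both} independent sign flips $S_m \leftarrow \hat{c}_m S_m$ covers all cases uniformly.
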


\begin{proof}
We apply Theorem~\ref{thm:qubit}. We already verified that Assumption~\ref{ass:parity} holds, for $s = \negl(\lambda)$, in the proof of Theorem~\ref{thm:klvy}. Since $\hat{c}_0=(-1)^a$, the observable $S_0$ as defined in~\eqref{eq:def-s} is the same observable as $S_0$ defined in the corollary. The observable $S_1$ defined in~\eqref{eq:def-s} is $S_1 = 2\Pi^1_{a+x}-\Id$, where $a,x$ are determined by the transcript of the first phase. Theorem~\ref{thm:qubit} gives a bound on $\bra{{\psi_\trans}}\{S_0,S_1\}^2\ket{{\psi_\trans}}$. Since this quantity is invariant under exchange $S_1 \leftarrow -S_1$, the corollary follows for the simpler definition of $S_1 = \Pi^1_0-\Pi^1_1$.
\end{proof}

\appendix

\section{A trigonometric identity} \label{appendix:trig}

\begin{lemma}
  The following inequality holds for all $\alpha, \beta \in [0, 2\pi]$:
  \begin{equation*}
    \cos^2(\alpha) + \cos^2(\beta) \leq \left| 2\cos^2(\alpha - \beta) - 1 \right| + 2 \cos^2(\pi/8)
  \end{equation*}
\end{lemma}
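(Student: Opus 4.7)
The plan is to reduce the inequality to a simple one-variable problem by standard trigonometric manipulations. First, I would rewrite the LHS using $\cos^2(x) = \tfrac{1+\cos(2x)}{2}$ and then apply sum-to-product to obtain
\[ \cos^2(\alpha) + \cos^2(\beta) \,=\, 1 + \cos(\alpha+\beta)\cos(\alpha-\beta). \]
Similarly, the RHS rewrites as $|\cos(2(\alpha-\beta))| + 1 + \tfrac{\sqrt{2}}{2}$, using $2\cos^2(\pi/8)-1=\cos(\pi/4)=\tfrac{\sqrt{2}}{2}$. Cancelling the $1$ on each side, the claim reduces to
\[ \cos(\alpha+\beta)\cos(\alpha-\beta) \,\leq\, |\cos(2(\alpha-\beta))| + \tfrac{\sqrt{2}}{2}. \]

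Next, I would use the trivial bound $\cos(\alpha+\beta)\cos(\alpha-\beta) \leq |\cos(\alpha-\beta)|$, which reduces the problem to the single-variable inequality
\[ |\cos(\alpha-\beta)| \,\leq\, |\cos(2(\alpha-\beta))| + \tfrac{\sqrt{2}}{2}. \]
Setting $x = |\cos(\alpha-\beta)| \in [0,1]$ and using the double-angle formula $\cos(2\theta) = 2\cos^2(\theta)-1$, this becomes the purely algebraic inequality
\[ x \,\leq\, |2x^2 - 1| + \tfrac{\sqrt{2}}{2} \quad \text{for } x\in[0,1]. \]

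Finally, I would split into two cases according to the sign of $2x^2-1$. For $x \geq 1/\sqrt{2}$, the inequality rearranges to $2x^2 - x - 1 + \tfrac{\sqrt{2}}{2} \geq 0$, a quadratic which vanishes at $x = 1/\sqrt{2}$ and whose derivative $4x-1$ is positive there, so it is nonnegative throughout $[1/\sqrt{2},1]$. For $x < 1/\sqrt{2}$, the inequality becomes $2x^2 + x - 1 - \tfrac{\sqrt{2}}{2} \leq 0$, another quadratic which vanishes at $x = 1/\sqrt{2}$, is negative at $x=0$, and is monotonically increasing on $[0,1/\sqrt{2}]$, so it is nonpositive on that interval. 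Combining both cases completes the proof.

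There is no real obstacle here; the only subtlety is recognizing that throwing away the factor $|\cos(\alpha+\beta)|$ loses nothing in terms of tightness, since the worst case of the original inequality is saturated at $\alpha-\beta = \pm\pi/4$ and $\alpha+\beta \in \{0,\pi\}$ simultaneously (corresponding, up to signs, to $\alpha,\beta \in \{\pm\pi/8\}$), which is precisely where the constant $2\cos^2(\pi/8)$ on the RHS comes from and consistent with the optimal CHSH angles that drive the quantum soundness bound.
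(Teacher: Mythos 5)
Your proof is correct and follows essentially the same route as the paper's: reduce via $\cos^2\phi = \tfrac12(1+\cos 2\phi)$ and sum-to-product to the single-variable inequality $t \le |2t^2-1| + \tfrac{1}{\sqrt 2}$ on $[0,1]$, then split on the sign of $2t^2-1$. The only cosmetic difference is that you absorb the sign bookkeeping by passing directly to $t=|\cos(\alpha-\beta)|$ and dropping $|\cos(\alpha+\beta)|\le 1$ in one step, whereas the paper first argues that $\cos(\alpha+\beta)$ and $\cos(\alpha-\beta)$ may be taken both nonnegative before bounding $\cos(\alpha+\beta)\le 1$; the two are equivalent.
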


\begin{proof}
  Using $\cos^2(\phi) = \frac{1}{2}(1 + \cos(2 \phi))$ and that $\cos^2(\pi/8) = \frac{1}{2}\left(1 + \frac{1}{\sqrt{2}} \right)$, we can rewrite the inequality as
  \begin{equation*}
    \frac{1}{2}(2 + \cos(2\alpha) + \cos(2\beta)) \leq \left| 2\cos^2(\alpha - \beta) - 1 \right| + 1 + \frac{1}{\sqrt{2}}\;,
  \end{equation*}  
  which after simplification and using the cosine sum rule becomes
  \begin{equation*}
    \cos(\alpha + \beta) \cos(\alpha - \beta) \leq \left| 2\cos^2(\alpha - \beta) - 1 \right| + \frac{1}{\sqrt{2}}\;.
  \end{equation*}    
  Let $x = \alpha + \beta, y = \alpha - \beta$, so that it will suffice to show
  \begin{equation*}
    \cos(x)\cos(y) \leq \left| 2\cos^2(y) - 1 \right| + \frac{1}{\sqrt{2}}\;.
  \end{equation*}
  Note that if $\cos(x)$ and $\cos(y)$ have opposite signs, the inequality is trivially satisfied, as the left-hand side will be non-positive while the right-hand side is always positive. Without loss of generality we restrict to the case where $\cos(x) \geq 0$ and $\cos(y) \geq 0$ (the case where they're both negative is analogous). As $\cos(x) \leq 1$, it's sufficient to show that
  \begin{equation*}
    \cos(y) \leq \left| 2\cos^2(y) - 1 \right| + \frac{1}{\sqrt{2}}\;.
  \end{equation*}
  Taking $t = \cos(y)$, with $0 \leq t \leq 1$, it suffices to show
  \begin{equation*}
    t \leq \left| 2t^2 - 1 \right| + \frac{1}{\sqrt{2}}\;.
  \end{equation*}
  Suppose first that $2t^2 - 1 \geq 0$ which means (since $t \geq 0$) that $t \geq \frac{1}{\sqrt{2}}$. In this case, we have to show that
  \begin{equation*}
    0 \leq 2t^2 - t - 1 + \frac{1}{\sqrt{2}}\;.
  \end{equation*}
  This follows from noting that $2t^2 - t - 1 + \frac{1}{\sqrt{2}}$ has roots $t_1 = \frac{1}{2} - \frac{1}{\sqrt{2}}$ and $t_2 = \frac{1}{\sqrt{2}}$ and is positive for all $t \leq t_1$ and $t \geq t_2$. Since we assumed $t \geq \frac{1}{\sqrt{2}}$ the result follows.

  \noindent Now suppose that $2t^2 - 1 \leq 0$ which means (since $t \geq 0$) that $0 \leq t \leq \frac{1}{\sqrt{2}}$. In this case, we have to show that
  \begin{equation*}
    0 \leq -2t^2 - t + 1 + \frac{1}{\sqrt{2}}\;.
  \end{equation*}
  Here, the roots are $t_1 = -\frac{1}{2} - \frac{1}{\sqrt{2}}$ and $t_2 = \frac{1}{\sqrt{2}}$ and the expression is positive for all $t_1 \leq t \leq t_2$. Since $0 \leq t \leq \frac{1}{\sqrt{2}}$, the inequality is satisfied, concluding the proof.
\end{proof}

\bibliography{refs}

\end{document}